\newcommand{\swap}{\mathsf{swap}}
\newcommand{\anc}{\mathsf{ANC}}
\newcommand{\des}{\mathsf{DES}}
\newcommand{\leftptr}{\mathsf{LeftPtr}}
\newcommand{\rightptr}{\mathsf{RightPtr}}
\newcommand{\ord}{\mathsf{ord}}
\newcommand{\donotshow}[1]{}
\newcommand{\ignore}[1]{}
\newtheorem{theorem}{Theorem}
\newtheorem{lemma}{Lemma}
\newtheorem{definition}{Definition}
\newcommand{\qed}{\rule[-0.2ex]{0.3em}{1.4ex}}
\newtheorem{claim}{Claim}
\newcommand{\mbegin}{\{\ \ }
\newcommand{\mend}{\}}
\newlength{\mleftindent}
\newlength{\mindent}
\newlength{\mboxwidth}
\newcommand{\mincrement}{\addtolength{\mboxwidth}{-\mindent}}
\newcommand{\mdecrement}{\addtolength{\mboxwidth}{\mindent}}
\newlength{\preprogramskip}
\newlength{\postprogramskip}
\newlength{\mexpwidth}
\newlength{\mexpindent}
\newcommand{\indentafterkeyword}{\hspace*{0.5em}}
\newcommand{\mslifelse}[3]  
{\setlength{\mexpwidth}{\mboxwidth}%
\settowidth{\mexpindent}{{\bf if\indentafterkeyword}}%
\addtolength{\mexpwidth}{-\mexpindent}%
{\bf if\indentafterkeyword}\parbox[t]{\mexpwidth}{#1}\\
\mincrement \mbegin \parbox[t]{\mboxwidth}{#2 \mend} \mdecrement \\
{\bf else} \\
\mincrement \mbegin \parbox[t]{\mboxwidth}{#3}\\
\mend \mdecrement
}
\newlength{\proofpostskipamount}
\newlength{\proofpreskipamount}
\newenvironment{proof}%
               {\par\vspace{\proofpreskipamount}\noindent{\bf Proof:}\hspace{0.5em}}
               {\nopagebreak%
                \strut\nopagebreak%
                \hspace{\fill}\qed\par\vspace{\proofpostskipamount}\noindent}
\par\vspace{0.5ex}\noindent{\bf Proof #1:}\hspace{0.5em}}%
\newlength{\mydefwidth}
\newlength{\mytextwidth}
\newcommand{\myurl}[1]{{\footnotesize \url{#1}}}
\title{\bf Faster Algorithms for Online Topological Ordering}
\author{Telikepalli Kavitha \hspace*{1.3cm} Rogers Mathew\\
\small{Indian Institute of Science}\\
\small{Bangalore, India}\\
\small{\sf \{kavitha,\ rogers\}@csa.iisc.ernet.in}}
\begin{document}
\date{}
\maketitle

\begin{abstract}
We present two algorithms for maintaining the topological order of
a directed acyclic graph with $n$ vertices, under an online edge insertion
sequence of $m$ edges. 
Efficient algorithms for online topological ordering have many applications,
including online cycle detection, which is to discover the first edge that introduces
a cycle under an arbitrary sequence of edge insertions in a directed graph.
The current fastest algorithms for the online topological ordering problem
run in time $O(\min(m^{3/2}\log n, m^{3/2}+n^2\log n))$ and $O(n^{2.75})$
(the latter algorithm is faster for dense graphs, i.e., when $m > n^{11/6}$).
In this paper we present faster algorithms for this problem.

We first present a simple algorithm with running time $O(n^{5/2})$ for 
the online topological ordering problem. This is the current fastest
algorithm for this problem on dense graphs, i.e., when $m > n^{5/3}$.
We then present an algorithm with running time
$O((m + n\log n)\sqrt{m})$, which is an improvement over the
$O(\min(m^{3/2}\log n, m^{3/2}+n^2\log n))$ algorithm - it is a {\em strict}
improvement when $m$ is 
sandwiched between $\omega(n)$ and $O(n^{4/3})$.
Our results yield an improved upper bound of $O(\min(n^{5/2}, (m + n\log n)\sqrt{m}))$
for the online topological ordering problem.
\end{abstract}


\section{Introduction}
\label{intro}
Let $G=(V,E)$ be a directed acyclic graph (DAG) with $|V|= n$ and $|E| = m$.
In a topological ordering, each vertex $v \in V$ is associated with a value
$\ord(v)$ such that for each directed edge $(u,v) \in E$ we have
$\ord(u) < \ord(v)$.
When the graph $G$ is known in advance (i.e., in an offline setting), there
exist well-known algorithms to compute a topological ordering of $G$
in $O(m+n)$ time \cite{CLR}.

In the {\em online} topological ordering problem, the edges of the 
graph $G$ are not
known in advance but are given one at a time. We are asked to maintain a
topological ordering of $G$ under these edge insertions. That is, each time
an edge is added to $G$,
we are required to update the function $\ord$ so that for all the edges
$(u,v)$ in $G$, it holds that $\ord(u) < \ord(v)$. 
The na\"ive way
of maintaining an online topological order, which is to compute the order each
time from scratch with the offline algorithm, takes $O(m^2 + mn)$ time.
However such an algorithm is too slow when the number of edges, $m$, is large.
Faster
algorithms are known for this problem (see Section~\ref{previous}). We show the
following results here\footnote{Here and in the rest of the paper, 
we make the usual assumption that $m$ is $\Omega(n)$ and there are no parallel
edges, so $m$ is $O(n^2)$.}.

\begin{theorem}
\label{thm1}
An online topological ordering of a directed acyclic graph $G$ on $n$
vertices, under a sequence of arbitrary edge insertions, can
be computed in time $O(n^{5/2})$, independent of the number of edges inserted.
\end{theorem}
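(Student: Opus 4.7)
The plan is to maintain a bijection $\ord : V \to \{1, \ldots, n\}$ together with an auxiliary data structure supporting neighborhood queries restricted to intervals of positions. When an edge $(u,v)$ arrives with $\ord(u) < \ord(v)$, the ordering is still topological and nothing need be done; otherwise we must reorder. Within the affected window $W = [\ord(v), \ord(u)]$ I would compute $\des$, the descendants of $v$ whose current order lies in $W$, and $\anc$, the ancestors of $u$ whose current order lies in $W$, by a simultaneous forward search from $v$ (moving $\rightptr$) and backward search from $u$ (moving $\leftptr$), both restricted to $W$. I would then locally re-number positions inside $W$ so that $\anc$ occupies the low end of $W$, the unaffected vertices of $W$ keep their relative order in the middle, and $\des$ occupies the high end.

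Correctness of the re-numbering is quick: edges whose endpoints both avoid $W$ are untouched; edges internal to $W$ either connect two unaffected vertices (relative order preserved), or cross into $\des$ from an unaffected or $\anc$ vertex (consistent since $\des$ sits at the high end), or leave $\anc$ symmetrically; edges into or out of $W$ from outside are already consistent since the previous ordering was topological and $\anc, \des$ are placed at the extreme ends of $W$; finally, edges between $\anc$ and $\des$ (including the new one) now point from lower to higher positions as required.

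The implementation step I would take next is to support, for any vertex $x$ on the search frontier, enumeration of its out- or in-neighbors whose current order lies in $W$ in time $O(\sqrt{n})$ per vertex explored. A natural candidate is $\sqrt{n}$-bucketing of each vertex's neighbor list by position, rebuilt lazily when a bucket becomes stale; with this the cost of a single insertion is bounded by $O(\sqrt{n}\,(|\anc|+|\des|))$, with the amortized maintenance of the structure absorbed into the same bound.

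The main obstacle, and where I would spend most of the effort, is the global amortized analysis. The key observation is that each time a vertex $w$ appears in the set $\anc$ for some insertion of edge $(u_t,v_t)$, the insertion turns $w$ into a \emph{new} ancestor of $v_t$: had $w$ already been an ancestor of $v_t$, the previous topological ordering would have forced $\ord(w) < \ord(v_t)$, contradicting $w \in \anc$. Since $w$ can acquire any fixed vertex as a new descendant at most once (the graph is a DAG and edges are only added), $w$ can lie in $\anc$ at fewer than $n$ insertions over the whole sequence. Summing over $w$ yields $\sum_t |\anc_t| = O(n^2)$, and symmetrically $\sum_t |\des_t| = O(n^2)$. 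Multiplying by the per-insertion cost of $O(\sqrt{n})$ per affected vertex gives total work $O(n^{5/2})$, independent of $m$ as claimed.
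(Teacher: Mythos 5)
Your counting observation — that whenever a vertex $w$ lies in $\anc$ for an insertion $(u_t,v_t)$, the pair $(w,v_t)$ becomes comparable for the first time, so $\sum_t(|\anc_t|+|\des_t|) = O(n^2)$ — is correct and elegant (a close relative of it appears in the paper only to bound the cost of the final cycle-check, as Lemma~\ref{check-G-dag}). Your placement of $\anc$ at the low end and $\des$ at the high end of the window is also defensible. But the load-bearing step, the claim that you can enumerate the in-neighbours of a frontier vertex $x$ that lie in $W$ in $O(\sqrt{n})$ time, is exactly the step that fails, and no amount of $\sqrt{n}$-bucketing rescues it: a single vertex can have $\Theta(n)$ in-neighbours whose positions all lie in $W$, so merely \emph{reporting} them costs $\omega(\sqrt{n})$, and the edges internal to $\anc$ are examined once per insertion in which both endpoints lie in $\anc$, giving a total as large as $\Theta(mn)$. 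Equivalently, your counting lemma bounds how many times each vertex is \emph{touched}, but not the \emph{work per touch}: with the trivial bound of $n$ per touch you only get $O(n^3)$, and the entire difficulty of the problem is in shaving that extra factor of $n$ down to $\sqrt{n}$.

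The paper's route around this is quite different and is worth contrasting with yours. Rather than doing graph search, the paper scans array positions in the window and tests adjacency-matrix entries, so the total work is charged as $\sum_e\sum_{x\in V}|\ord_e(x)-\ord'_e(x)|$, i.e.\ total vertex \emph{displacement}. It then does \emph{not} argue that each displacement is small on average in a per-vertex way; instead it decomposes each reordering into swaps, observes that each unordered pair can be swapped at most once over the whole sequence, and invokes the linear-programming bound of Ajwani--Friedrich--Meyer (Lemma~\ref{afm-lemma}) to conclude $\sum d(x,y)=O(n^{5/2})$. That LP bound combines the constraint $d(x,y)\le n$ with a \emph{net-movement} constraint (every vertex ends within distance $n$ of where it started); your argument uses only the number-of-swaps constraint and the coarse $d(x,y)\le n$ bound, which together give $O(n^3)$ and no better. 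In short: your counting lemma is a correct ingredient, but without either the AFM06 LP or a genuinely $O(\sqrt{n})$-per-visit search primitive (which you have not established and which I do not think exists with lazy bucketing), the $O(n^{5/2})$ bound does not follow.
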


\begin{theorem}
\label{thm2}
An online topological ordering of a directed acyclic graph $G$ on $n$
vertices, under any sequence of insertions of $m$ edges, can
be computed in time $O((m + n\log n)\sqrt{m})$.
\end{theorem}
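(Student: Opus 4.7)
The plan is to process each edge insertion in three stages: detect whether the current order is violated; if so, run a localized bidirectional search that identifies the minimal region requiring reorganization; then relabel the vertices of that region using an efficient order-maintenance data structure. The target bound $O((m+n\log n)\sqrt m)$ suggests a $\sqrt m$ threshold argument balancing the cost of traversals against the cost of reorderings.

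Concretely, I would represent $\ord$ by a Dietz--Sleator order-maintenance structure, which supports comparisons in $O(1)$ time and insertions/moves in amortized $O(\log n)$ time. For each vertex I would keep its in- and out-neighbor lists sorted by $\ord$, together with pointers $\leftptr$ and $\rightptr$ summarizing how far the next backward/forward search from that vertex could conceivably need to go. On insertion of $(u,v)$: if $\ord(u)<\ord(v)$ we are done; otherwise, launch a forward search from $v$ visiting only descendants with $\ord<\ord(u)$ (collecting $\des$) and simultaneously a backward search from $u$ visiting only ancestors with $\ord>\ord(v)$ (collecting $\anc$). Having identified $\anc$ and $\des$, shift $\anc$ into the gap immediately to the left of $\des$ while preserving internal relative order inside each set; this is the classical $\swap$-style reorganization, and it costs $O(\log n)$ per affected vertex in the order structure.

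The main obstacle is the amortized analysis, and it splits into two bookkeeping arguments. For the search cost, I would assign to each edge a potential tied to the $\leftptr/\rightptr$ ``reach'' of its endpoints, and show that each exploration step either makes real progress (a vertex is relabeled and the potential of its incident edges strictly drops) or is charged against a previous potential increase, so that no edge is scanned more than $O(\sqrt m)$ times across the entire sequence; this yields $O(m\sqrt m)$ total search work. For the reordering cost, I would argue that a vertex $v$ can be involved in a $\swap$ only if at least one new incident edge has appeared since its last move, but using the $\sqrt m$-threshold (processing ``cheap'' small searches directly and invoking a bulk $\leftptr/\rightptr$ refresh for ``expensive'' large searches) each vertex is moved at most $O(\sqrt m)$ times; at $O(\log n)$ per move this contributes $O(n\sqrt m\,\log n)$.

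Combining the two gives exactly $O((m+n\log n)\sqrt m)$. The delicate part will be the potential definition that simultaneously dominates both the forward and backward search costs while respecting the asymmetry introduced by $\leftptr$ versus $\rightptr$; I expect this to require choosing the threshold that separates ``small'' from ``large'' searches to be $\Theta(\sqrt m)$, since any other value leaves one of the two summands super-linear in the target bound. Once that balance is established, correctness of the reordering step reduces to the standard observation that placing all of $\anc$ before all of $\des$, with the original ordering kept inside each, restores a valid topological order for $G\cup\{(u,v)\}$.
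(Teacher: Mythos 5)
Your high-level scaffolding is on the right track (Dietz--Sleator order maintenance, a bidirectional search collecting $\anc$ and $\des$, and a shift-based reorder that preserves relative order within each set -- that last correctness observation is exactly right), but the amortized analysis you sketch is not the argument that actually yields $O((m+n\log n)\sqrt m)$, and as stated it contains claims that are simply false. You assert that ``no edge is scanned more than $O(\sqrt m)$ times across the entire sequence'' and that ``each vertex is moved at most $O(\sqrt m)$ times.'' Neither holds in the worst case: a single vertex can be relocated $\Theta(m)$ times over an adversarial insertion sequence, and individual edges can be re-examined far more often than $\sqrt m$ times. What can be bounded is the \emph{aggregate}, and the mechanism that produces the $\sqrt m$ factor is entirely different from a per-edge or per-vertex threshold. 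The paper charges the cost of inserting edge $e_i$ to the number of \emph{newly ordered pairs} created by that insertion -- pairs of edges $\Phi(e_i)$, pairs of vertices $N(e_i)$, and mixed vertex--edge pairs $\Psi(e_i)$, $\nu(e_i)$ -- shows $T(e_i)$ is bounded by roughly $\sqrt{\Phi(e_i)} + \sqrt{N(e_i)}\log n + \sqrt{\Psi(e_i)\log n}$, and then applies Cauchy--Schwarz over the $m$ insertions, using $\sum_i \Phi(e_i)\le\binom{m}{2}$, $\sum_i N(e_i)\le\binom{n}{2}$, $\sum_i\Psi(e_i)\le mn$. The $\sqrt m$ appears because Cauchy--Schwarz is applied to $m$ nonnegative terms, not because any individual object is touched $O(\sqrt m)$ times.

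You are also missing the algorithmic ingredient that makes these potential bounds available. The search must visit ancestors of $u$ in strictly decreasing ORD order and descendants of $v$ in strictly increasing ORD order (the paper uses Fibonacci heaps $F_u$, $F_v$ with extract-max/extract-min for this -- your ``in/out-neighbor lists sorted by $\ord$'' will not survive dynamic relabeling), and crucially it must \emph{balance} the exploration on the two sides. The paper's rule is to visit both the next ancestor $x$ and the next descendant $y$ whenever $m_A$ and $m_D$ (the accumulated in/out-degree masses on each side) are within an additive $|\anc|\log n$ or $|\des|\log n$ of each other, and to advance only the cheaper side otherwise. This balancing is what lets one argue $m_A\cdot m_D \le \Phi(e_i)$ and $|\anc|\cdot|\des|\le N(e_i)$: because the search explored both sides to comparable extents, every pair counted actually becomes newly ordered when $(u,v)$ is inserted. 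Your proposal never specifies a decision rule for when to advance which frontier, and without one the per-insertion work cannot be squared against a pair potential. A fixed absolute $\sqrt m$ cutoff (``small vs.\ large searches'') is not a substitute: for it to work you would need to know in advance how many descendants or ancestors the search will find, and there is no refresh operation that makes the $\leftptr/\rightptr$ ``reach'' pointers consistent without already doing linear work. Concretely, the gap is: replace the threshold heuristic with the relative-balance criterion, use heaps for ordered extraction, define the new-pair potentials, prove the square inequalities, and finish with Cauchy--Schwarz.
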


The online topological ordering problem has several applications and
efficient algorithms for this problem are 
used in online cycle detection routine in pointer analysis \cite{PKH} and
in incremental evaluation of computational circuits \cite{AHRSZ90}. 
This problem
has also been studied
in the context of compilation \cite{MNR93,OLB92} where dependencies between
modules are maintained to reduce the amount of recompilation performed when
an update occurs.

\subsection{Previous Results}
\label{previous}
The online topological ordering problem
has been well-studied. Marchetti-Spaccamela et al. \cite{MNR} gave an algorithm
that can insert $m$ edges in $O(mn)$ time. Alpern et al. \cite{AHRSZ90}
proposed a 
different algorithm which runs in $O(\parallel\delta\parallel\log\parallel\delta\parallel)$ time per
edge insertion with $\parallel\delta\parallel$ measuring the number of edges of the
minimal vertex subgraph that needs to be updated. However, note that not all the
edges of this subgraph need to be updated and hence even 
$\parallel\delta\parallel$ time
per insertion is not optimal. Katriel and Bodlaender \cite{KB05} 
analyzed a variant of the algorithm in \cite{AHRSZ90} 
and obtained an upper bound of 
$O(\min\{m^{3/2}\log n, \ m^{3/2} + n^2\log n\})$ for a general DAG. In
addition, they show that their algorithm runs in time $O(mk\log^2n)$ for
a DAG for which the underlying undirected graph has a treewidth of $k$, and
they show an optimal running time of $O(n\log n)$ for trees.
Pearce and Kelly \cite{PK04} 
present an algorithm that empirically performs very well on
sparse random DAGs, although its worst case running time is inferior to 
\cite{KB05}.

Ajwani et al. \cite{AFM06} gave the first $o(n^3)$ algorithm for the
online topological ordering problem. They propose a simple algorithm that
works in time $O(n^{2.75}\sqrt{\log n})$ and $O(n^2)$ space, thereby
improving upon the algorithm \cite{KB05} for dense DAGs. With some simple 
modifications in their data structure, they get $O(n^{2.75})$ time
and $O(n^{2.25})$ space. They also demonstrate empirically that their algorithm
outperforms the algorithms in \cite{PK04,MNR,AHRSZ90} 
on a certain class of hard sequence of edge insertions.

The only non-trivial lower bound for online topological ordering is due to
Ramalingam and Reps \cite{RR94}, who showed that an adversary can force any 
algorithm to perform $\Omega(n\log n)$ vertex relabeling operations while 
inserting $n-1$ edges (creating a chain). There is a large gap between 
the lower bound of $\Omega(m + n\log n)$ and the upper bound of
$O(\min\{n^{2.75}, \ m^{3/2}\log n, \ m^{3/2} + n^2\log n\})$.

\paragraph{Our Results.} The contributions of our paper are as follows:
\begin{itemize}
\item Theorem~\ref{thm1} shows an upper bound of $O(n^{5/2})$ for
the online topological ordering problem. This is
always better than the previous best upper bound of $O(n^{2.75})$ in \cite{AFM06}
for dense graphs. Our $O(n^{5/2})$ algorithm is the current fastest algorithm for
online topological ordering when
$m > n^{5/3}$.

\item Theorem~\ref{thm2} shows 
another improved upper bound of $O((m+n\log n)\sqrt{m})$.
This improves upon the bounds of 
$O(m^{3/2}\log n)$ and $O(m^{3/2}+n^2\log n)$ given in 
\cite{KB05}. Note that this is a strict improvement over 
$\Theta(\min(m^{3/2}\log n $, $(m^{3/2}+n^2\log n)))$ when $m$ is
sandwiched between $\omega(n)$ and $O(n^{4/3})$.
\end{itemize}
Combining our two algorithms, we have an improved
upper bound of $O(\min(n^{5/2}, (m+n\log n)\sqrt{m}))$ for the
online topological ordering problem.

Our $O(n^{5/2})$ algorithm is very simple and basically involves traversing 
successive locations of an array and checking the entries of the
adjacency matrix; it
uses no special data structures and is easy to implement, 
so it would be an efficient online cycle detection subroutine
in practice also. The tricky part here is showing the bound on
its running time
and we use a result from \cite{AFM06} in its analysis.
Our $O(m+n\log n)\sqrt{m})$ algorithm is an adaptation of the
Katriel-Bodlaender algorithm in \cite{KB05} (in turn based on the
algorithm in \cite{AHRSZ90}) and uses the Dietz-Sleator ordered list
data structure
and Fibonacci heaps.

\paragraph{Organization of the paper.} In Section~\ref{algo} we
present our $O(n^{5/2})$ algorithm and show its correctness.
We analyze its running time in Section~\ref{runtime}. 
We present our $O((m+n\log n)\sqrt{m})$ algorithm and its analysis in 
Section~\ref{new-algo}. The missing details are given in the Appendix.

\section{The $O(n^{5/2})$ Algorithm}
\label{algo}
We have a directed acyclic graph $G$ on vertex set $V$. 
In this section we present an algorithm that maintains a bijection
$\ord: V \rightarrow \{1,2,\ldots,n\}$ which is our topological ordering.
Let us assume
that the graph is initially the empty graph and so any bijection from
$V$ to $\{1,2,\ldots,n\}$ is a valid topological ordering of $V$ at the onset. 
New edges get inserted to this graph and after each edge insertion, we
want to update the current bijection from $V$ to $\{1,2,\ldots,n\}$ 
to a valid topological
ordering.
 
Let the function $\ord$  from $V$ onto $\{1,2,\ldots,n\}$ denote our
topological ordering.
We also have the inverse function of $\ord$ stored as 
an array $A[1..n]$, where $A[i]$ is the vertex $x$ such that $\ord(x) = i$.
We have the 
$n \times n$ $0$-$1$ 
adjacency matrix $M$ of the directed acyclic graph, corresponding
to the edges inserted so far; $M[x,y] = 1$ if and only if there is an
edge directed from vertex $x$ to vertex $y$.

When a new edge $(u,v)$ is added to $G$, there are two
cases: (i) either $\ord(u) < \ord(v)$ in which case the current ordering 
$\ord$ is
still a valid ordering, so we need to do nothing, except set the entry
$M[u,v]$ to $1$, or (ii)~$\ord(u) > \ord(v)$ in which case we need to
update $\ord$. We now present 
our algorithm to update $\ord$, when an edge $(u,v)$ such that 
$\ord(u) > \ord(v)$, is 
added to $G$. If $(u,v)$ creates a cycle, the algorithm quits; else it updates
$\ord$ to a valid topological ordering. 

[Let $s \leadsto t$ indicate that there is a directed path (perhaps, of length
0) from vertex $s$ to vertex $t$ in $G$. If $s\leadsto t$, we say $s$ is an {\em ancestor}
of $t$ and $t$ is a {\em descendant} of $s$.
We use $s \rightarrow t$ to indicate that
$(s,t) \in E$.]

\subsection{Our algorithm for inserting a new edge $(u,v)$ where $\ord(u) > \ord(v)$}
\label{summary-algo}

Let $\ord(v) = i$ and $\ord(u) = j$.
Our algorithm works only on the subarray $A[i..j]$ and 
computes a subset (call it $\anc$) of ancestors of $u$
in $A[i..j]$ and a subset (call it $\des$) of
descendants of $v$ in $A[i..j]$; it assigns new positions in $A$
to the vertices in $\anc \cup \des$.
This yields the new topological ordering $\ord$.

We describe our  algorithm in two phases: 
$\mathsf{Phase}~1$ and $\mathsf{Phase}~2$. 
In $\mathsf{Phase}~1$ we construct the sets:
\begin{eqnarray*}
   \des & = & \{y: i \le \ord(y) \le t \ \mathsf{and} \ v \leadsto y\} \ \text{(the set of descendants of $v$ in the subarray $A[i..t]$)} \\
   \anc & = & \{w: t \le \ord(w) \le j \ \mathsf{and} \ w \leadsto u\} \ \text{(the set of ancestors of $u$ in the subarray $A[t..j]$)}
\end{eqnarray*}
where $t$ is a number such that
$i \le t \le j$ and  $t$ has the following property:
{\em if $G$ is a DAG, then the number of descendants of $v$ in 
$A[i..t]$ is exactly equal to the number of ancestors of $u$ in $A[(t+1)..j]$.}

We then check if $(x,y) \in E$ for any $x \in \des$ and $y \in \anc$, or if
$A[t] \in \des\cap\anc$.
If either of these is true, then  
the edge $(u,v)$ creates a cycle and $G$ is no longer a DAG,
so our algorithm quits.
Else, we delete the elements of $\des\cup\anc$ from their locations in
$A$, thus creating empty locations in $A$.

$\mathsf{Phase}~2$ deals with inserting elements of $\anc$ in $A[i..t]$ and
the elements of $\des$ in $A[(t+1)..j]$.
Note that we cannot place the vertices in $\anc$ straightaway in the empty
locations previously occupied by $\des$ in $A[i..t]$ since there might be 
further ancestors of elements of $\anc$ in $A[i..t]$. Similarly, there might be
descendants of elements of
$\des$ in $A[(t+1)..j]$. Hence we need $\mathsf{Phase}~2$ to add more elements
to $\anc$ and to $\des$, and to
insert elements of $\anc$ in $A[i..t]$ and those of $\des$ in
$A[(t+1)..j]$ correctly.

\subsubsection{\textsc{Phase}~1.}
\label{label1}
We now describe $\mathsf{Phase}~1$ of 
our algorithm in detail. 
Initially the set $\anc = \{u\}$ and the set $\des = \{v\}$. We maintain 
$\anc$ and $\des$ as queues.
\begin{figure}[h]
\psfrag{A:}{$A:$}
\psfrag{u}{$u$}
\psfrag{v}{$v$}
\psfrag{i}{$i$}
\psfrag{j}{$j$}
\psfrag{t}{$t$}
\psfrag{l}{$\leftptr$}
\psfrag{r}{$\rightptr$}
\centerline{\includegraphics[width=3.6in]{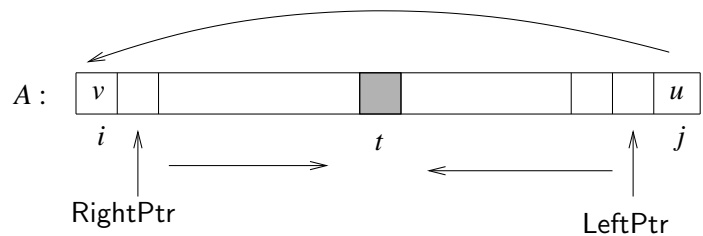}}
\caption{The pointer $\leftptr$ and $\rightptr$ meet at the location $t$.}
\label{figure2}
\end{figure}
We  move a pointer $\leftptr$ from location $j$ leftwards 
(towards $i$ as shown in Figure~\ref{figure2})  
in order to find an ancestor $w$ of $u$, which gets added to the end
of the queue $\anc$.
Then we move a pointer $\rightptr$ from location $i$ rightwards 
to find a descendant $y$ of $v$, which gets added to the end of $\des$.
Then we go back to $\leftptr$, and 
thus interleave adding a vertex to $\anc$ with adding
a vertex to $\des$ so that we balance the size of $\anc$ constructed so far
with the size of $\des$. When $\leftptr$ and $\rightptr$ meet, that
defines our desired location $t$. 
We present the detailed algorithm for $\mathsf{Phase}~1$ as 
Algorithm~\ref{fig-algo}. 
[If $(x,y)\in E$, we say $x$ is a {\em predecessor} of $y$
and $y$ is a {\em successor} of $x$.]

\begin{algorithm}[h]
\begin{algorithmic}
\STATE{-- Initialize $\anc = \{u\}$ and $\des = \{v\}$.}
\STATE{-- set $\rightptr = i$ and $\leftptr = j$.}
\COMMENT{So $\leftptr > \rightptr$.}
\WHILE{$\mathsf{TRUE}$}
\STATE{$\leftptr = \leftptr - 1$;}
\WHILE{$\leftptr > \rightptr$ {\bf and} $A[\leftptr]$ is not a predecessor of any vertex in $\anc$}
\STATE{$\leftptr = \leftptr - 1$;}
\ENDWHILE
\IF{$A[\leftptr]$ is a predecessor of some vertex in $\anc$}
\STATE{-- Insert the vertex $A[\leftptr]$ to the queue $\anc$.}
\ENDIF
\IF{$\leftptr = \rightptr$} 
\STATE{\bf break}
\COMMENT{this makes the algorithm break the while $\mathsf{TRUE}$ loop}
\ENDIF


\vspace*{0.2cm}

\COMMENT{Now the symmetric process from the side of $v$.}

\vspace*{0.2cm}
\STATE{$\rightptr = \rightptr + 1$;}
\WHILE{$\rightptr < \leftptr$ {\bf and} $A[\rightptr]$ is not a successor of any vertex in $\des$}
\STATE{$\rightptr = \rightptr + 1$;}
\ENDWHILE
\IF{$A[\rightptr]$ is a successor of some vertex in $\des$}
\STATE{-- Insert the vertex $A[\rightptr]$ to the queue $\des$.}
\ENDIF
\IF{$\rightptr = \leftptr$}
\STATE{\bf break}
\ENDIF
\COMMENT{If the algorithm does not break the while $\mathsf{TRUE}$ loop, then $\leftptr > \rightptr$.}
\ENDWHILE
\end{algorithmic}
\caption{Our algorithm to construct the sets $\anc$ and $\des$ in $\mathsf{Phase}~1$.}
\label{fig-algo}
\end{algorithm}

The above algorithm
terminates when  $\leftptr = \rightptr$ is satisfied.
Set $t$ to be this
location: that is, $t = \rightptr = \leftptr$.
It is easy to check that Algorithm~\ref{fig-algo} constructs the sets:
\[ \des = \{y: i \le \ord(y) \le t \ \mathsf{and} \ v \leadsto y\}; \ \ \ 
   \anc = \{w: t \le \ord(w) \le j \ \mathsf{and} \ w \leadsto u\}.
\]
That is, $\des$ is the set of descendants
of $v$ in $A[i..t]$ and $\anc$ is the set of ancestors of $u$
in $A[t..j]$. The following lemma is straightforward.

\begin{lemma}
\label{cycle-check}
If the new edge $(u,v)$ creates a cycle, then 
(i) either  $A[t] \in \des \cap \anc$ or
(ii) there is some $x \in \des$ and $y \in \anc$ such that
 there is an edge from $x$ to $y$.
\end{lemma}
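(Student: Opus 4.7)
The plan is to use the fact that right before $(u,v)$ is inserted, $G$ is a DAG and $\ord$ is a valid topological ordering of it. If adding $(u,v)$ creates a cycle, then there must already exist a directed path in $G$ from $v$ to $u$. I would fix such a path $v = p_0 \to p_1 \to \cdots \to p_k = u$. Because $\ord$ was a valid topological ordering of the pre-insertion DAG, along this path the $\ord$-values are strictly increasing: $i = \ord(v) = \ord(p_0) < \ord(p_1) < \cdots < \ord(p_k) = \ord(u) = j$. In particular every $p_\ell$ lies in the subarray $A[i..j]$, and each $p_\ell$ is simultaneously a descendant of $v$ (via the prefix of the path) and an ancestor of $u$ (via the suffix).

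Next I would do a case analysis based on where $t$ sits in the sequence of $\ord$-values along this path. If there exists an index $\ell$ with $\ord(p_\ell) = t$, then $A[t] = p_\ell$, and since $v \leadsto p_\ell \leadsto u$, the vertex $A[t]$ belongs to $\des$ (it is a descendant of $v$ with $i \le \ord(A[t]) \le t$) and also to $\anc$ (it is an ancestor of $u$ with $t \le \ord(A[t]) \le j$), giving conclusion~(i). Otherwise no vertex on the path has $\ord$-value exactly $t$; since $\ord(p_0) = i \le t \le j = \ord(p_k)$ and the $\ord$-values are strictly increasing integers, there must be a consecutive pair $p_{\ell^*}, p_{\ell^*+1}$ on the path with $\ord(p_{\ell^*}) < t < \ord(p_{\ell^*+1})$. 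Then $p_{\ell^*} \in \des$ and $p_{\ell^*+1} \in \anc$ by definition, and the edge $p_{\ell^*} \to p_{\ell^*+1}$ is an edge from $\des$ to $\anc$, giving conclusion~(ii).

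The boundary cases $t = i$ and $t = j$ need a quick sanity check but also fall under case~(i): when $t = i$ we have $A[t] = v$, and the cycle hypothesis $v \leadsto u$ places $v \in \anc$ as well as the trivial $v \in \des$; the $t = j$ case is symmetric with $A[t] = u$.

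There is no real obstacle here: the lemma is essentially a structural observation, and the proof is driven entirely by the monotonicity of $\ord$ along any directed path in the old DAG plus the definitions of $\des$ and $\anc$. The one thing to be careful about is making sure the case split on $t$ is exhaustive and that one does not implicitly assume $i < t < j$; once the boundary positions are handled (which, as noted, also land in case~(i)), the argument is complete.
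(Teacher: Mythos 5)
Your proof is correct and takes essentially the same route as the paper: fix a directed path from $v$ to $u$ that must exist in the pre-insertion DAG, observe that $\ord$ is strictly increasing along it, and use the location of $t$ relative to the path's $\ord$-values to land in one of the two cases. The paper's version is terser (it gestures at a figure rather than doing the explicit case split on whether some $p_\ell$ has $\ord(p_\ell) = t$), but the underlying argument is identical, and your extra care about the boundary positions $t=i$ and $t=j$ is fine though, as you note, subsumed by the first case.
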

\begin{proof}
If a cycle is created by the insertion of the edge $(u,v)$, then it
implies that $v \leadsto u$ in the current graph. That is,
there is a directed path $\rho$ in the 
graph from $v$ to $u$, before $(u,v)$ was inserted. 
This implies that there is either
an element in $\des\cap\anc$ or there is 
an edge in $\rho$ that connects a descendant of $v$ in 
$A[i..t]$ to an ancestor of $u$ in $A[t..j]$ (see Figure~\ref{figure3}).
\begin{figure}[h]
\psfrag{A:}{$A:$}
\psfrag{u}{$u$}
\psfrag{v}{$v$}
\psfrag{i}{$i$}
\psfrag{j}{$j$}
\psfrag{t}{$t$}
\psfrag{path}{The path $\rho$ from $v$ to $u$}
\centerline{\includegraphics[width=3.4in]{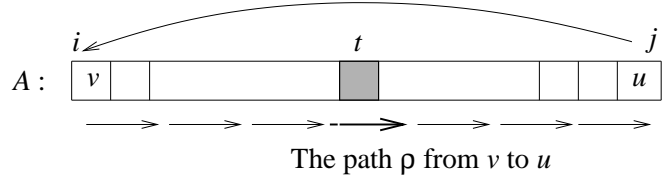}}
\caption{There has to be an edge in $\rho$ from a descendant of $v$ in
$A[i..t]$ to an ancestor of $u$ in $A[t..j]$ or $A[t] \in \anc\cap\des$.}
\label{figure3}
\end{figure}

Hence either there is an element in $\des\cap\anc$, the only
such element can be $A[t]$, or 
there has to be an edge from a vertex in $\des$ to a vertex in $\anc$,
where $\des$ is the set of descendants
of $v$ in $A[i..t]$ and $\anc$ is the set of ancestors of $u$
in $A[t..j]$. 
\end{proof}

Lemma~\ref{cycle-check} shows us that cycle detection is easy, so
let us assume henceforth that the edge $(u,v)$ does not create
a cycle, hence the graph $G$ is still a DAG. The following lemma 
shows that the number $t$ has the desired property that we were looking for.

\begin{lemma}
\label{correct-t}
The number $t$ has the property that the number of descendants 
of $v$ in $A[i..t]$ is equal to the number of ancestors of $u$ in
$A[(t+1)..j]$.
\end{lemma}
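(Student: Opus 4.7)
My plan is to track the sizes of $\anc$ and $\des$ through the iterations of Algorithm~\ref{fig-algo} and finish with a short case analysis at termination. Since by construction $\anc$ is exactly the set of ancestors of $u$ in $A[t..j]$, the quantity I need, namely the number of ancestors of $u$ in $A[(t+1)..j]$, equals $|\anc|$ minus the indicator that $A[t]\in\anc$. So the real identity I am after is
\[
|\des| \;=\; |\anc| - [A[t]\in\anc].
\]

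The first step is a per-iteration invariant: any full iteration of the outer \textsc{while} loop that does not fire a \textbf{break} adds exactly one vertex to $\anc$ and exactly one vertex to $\des$. For the left half, notice that the inner while loop terminates either with $\leftptr=\rightptr$ (which immediately triggers the first break and so aborts the iteration) or because $A[\leftptr]$ is a predecessor of some vertex in $\anc$, in which case the following \textbf{if} inserts it into $\anc$. The right half is symmetric. Since $|\anc|=|\des|=1$ initially, this yields $|\anc|=|\des|=k+1$ after $k$ complete iterations, for some $k\ge 0$.

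The second step is a case split on which break statement terminates the $(k{+}1)$-st iteration. If the first break fires, then $|\des|$ is unchanged at $k+1$, and whether $A[t]$ was added to $\anc$ in this iteration is exactly the condition $A[t]\in\anc$, so the identity reduces to $k+1=k+1+\epsilon-\epsilon$ and holds trivially. If instead the second break fires, then the left half did not break, so by the invariant argument it added a vertex to $\anc$; since $\rightptr$ subsequently advances to meet $\leftptr$, that vertex is precisely $A[t]$, giving $A[t]\in\anc$ and $|\anc|=k+2$. Here I would invoke Lemma~\ref{cycle-check}: the no-cycle assumption forces $A[t]\notin\anc\cap\des$, so the right half could not have added $A[t]$ to $\des$, leaving $|\des|=k+1$, and the identity again holds.

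The only delicate point — the ``main obstacle,'' such as it is — is the second-break case: one must notice that failure to trigger the first break forces the left half to add a vertex, and that this vertex becomes $A[t]$, and then apply Lemma~\ref{cycle-check} to block $A[t]$ from also landing in $\des$. Everything else is bookkeeping that follows from the alternation structure of the loop.
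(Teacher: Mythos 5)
Your proof is correct and follows the same route as the paper's: track $|\anc|$ and $|\des|$ through the loop iterations, case-split on which break statement fires, and use the no-cycle assumption to pin down $A[t]$'s membership. Two small remarks: the fact you need in the second-break case is that $A[t]\in\anc\cap\des$ would create a cycle, which is the (easy) converse of Lemma~\ref{cycle-check} rather than the lemma itself, though it is immediate since $v\leadsto A[t]\leadsto u$ together with the new edge $(u,v)$ closes a cycle; and your observation that the first-break case closes for free via the $\epsilon-\epsilon$ cancellation is a nice economy that the paper's proof does not make, as it invokes the no-cycle hypothesis in both cases.
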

\begin{proof}
We have  $t = \rightptr = \leftptr$.  When we terminate the while loop,
$\anc$ is the set of
ancestors of $u$ in $A[t..j]$ and 
$\des$ is the set of
descendants of $v$ from $A[t..j]$. Since $(u,v)$ does not create a cycle,
$A[t]$ is either a descendant of $v$ or an ancestor of $u$, but not both.

If the main while loop got broken because of the first break statement, then 
$A[t]$ is a descendant of
$v$ and if the main while loop got broken due to the second break statement, 
then $A[t]$ is an ancestor of $u$.
Since we interleave adding a vertex to the set
$\anc$ with adding a vertex to the set $\des$, if we break because of the 
first break statement, we have $|\anc| = |\des| $; and if we
break because of the second break statement, we have $|\anc| = |\des|+1$.
Thus in both cases, it holds that the number of descendants of $v$
in $A[i..t]$ is equal to the number of ancestors of $u$ in $A[(t+1)..j]$.
\end{proof}

At the end of $\mathsf{Phase}~1$, since $G$ is still a DAG, 
we delete all the elements of $\anc\cup\des$ 
from their current locations in $A$.
The vertices in $\anc$ have to find new places in $A[i..t]$ and the vertices
in $\des$ have to find new places in $A[(t+1)..j]$.\footnote{Note that
the number of empty locations in $A[i..t]$ exactly equals $|\anc|$ and
the number of empty locations in $A[(t+1)..j]$ exactly equals $|\des|$.}
  This forms $\mathsf{Phase}~2$ of our algorithm.


\medskip

{\bf \textsc{Phase}~2.}
We now describe $\mathsf{Phase}~2$ from the side of $\anc$ as
Algorithm~\ref{fig-algo2}.
(A symmetric procedure also takes place on the side of $\des$.)
In this phase vertices get deleted from $\anc$
and vertices can also get added to $\anc$. 
\begin{algorithm}[h]
\begin{algorithmic}
\STATE{$\leftptr = t$;}
\WHILE{$\leftptr \ge i$}
\IF{$A[\leftptr]$ is an empty location} 
\STATE{-- delete the head of the queue $\anc$, call it $h$, and
set $A[\leftptr] = h$}
\ELSIF{$A[\leftptr]$ (call it $r$) is a {\em predecessor} of some element in 
$\anc$}
\STATE{-- insert $r$ to the queue $\anc$}
\STATE{ -- delete the  head of the queue $\anc$, call it $h$, and set $A[\leftptr] = h$}
\ENDIF
\STATE{$\leftptr = \leftptr - 1$;}
\ENDWHILE
\end{algorithmic}
\caption{$\mathsf{Phase}~2$ of our algorithm from the side of $\anc$.}
\label{fig-algo2}
\end{algorithm}

In $\mathsf{Phase}~2$, the pointer $\leftptr$ traverses the 
subarray $A[t..i]$ (from $t$ leftwards to $i$)
and elements get deleted/inserted in $\anc$.
Whenever $\leftptr$ sees an empty location in $A$,
the head of the queue $\anc$ is deleted from $\anc$ and is assigned to 
that empty location. Whenever $A[\leftptr]$ is a predecessor of some
element in $\anc$, then $A[\leftptr]$ is removed from that location and 
is inserted into $\anc$ and the current head of the queue $\anc$ is
inserted into that location.
We have the following lemma, which is simple to show. Its proof is
included in the Appendix.
\begin{lemma}
\label{lemma3}
The subroutine in Algorithm~\ref{fig-algo2} 
maintains the following invariant in every iteration
of the while loop: the number of elements in $\anc$ equals the number of
empty locations in the subarray $A[i..\leftptr]$.
\end{lemma}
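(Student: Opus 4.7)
My plan is to prove Lemma~\ref{lemma3} by induction on the iteration number of the while loop in Algorithm~\ref{fig-algo2}, showing that just before each iteration (when the test $\leftptr \ge i$ is about to be evaluated) the equation $|\anc| = $ (number of empty slots in $A[i..\leftptr]$) holds.

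For the base case, the first iteration begins with $\leftptr = t$. At the end of $\mathsf{Phase}~1$, every element of $\anc\cup\des$ was removed from $A$. Since $\des\subseteq A[i..t]$ and the only element of $\anc$ that can possibly lie in $A[i..t]$ is $A[t]$ itself (recall $\anc\subseteq A[t..j]$), the empty slots in $A[i..t]$ are exactly the old positions of $\des$, together with position $t$ precisely when $A[t]\in\anc$. The two subcases identified inside the proof of Lemma~\ref{correct-t} give either $A[t]\in\des$ with $|\anc|=|\des|$, or $A[t]\in\anc$ with $|\anc|=|\des|+1$. In both cases the empty count matches $|\anc|$, so the invariant holds at the start.

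For the inductive step, suppose the invariant holds at the start of an iteration with $\leftptr = p$, so $|\anc| = k$ equals the number of empty slots in $A[i..p]$. I would argue by a three-way case analysis on $A[p]$. If $A[p]$ is empty, then one vertex is dequeued from $\anc$ and placed at $A[p]$, so both $|\anc|$ and the empty count in $A[i..p]$ drop by one; after $\leftptr$ is decremented to $p-1$, the empty count in $A[i..p-1]$ is $k-1 = |\anc|$. If $A[p]$ is nonempty and is a predecessor of some element in $\anc$, then $A[p]$ enters $\anc$ and the old head $h$ of $\anc$ is written into $A[p]$, so $|\anc|$ and the empty count in $A[i..p]$ are each unchanged; since $A[p]$ remains nonempty after the swap, the empty count in $A[i..p-1]$ equals the one in $A[i..p]$, namely $k$. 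Finally, if $A[p]$ is nonempty and is not a predecessor of any vertex in $\anc$, nothing changes, and the empty count in $A[i..p-1]$ is $k$ for the same reason.

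The only genuinely delicate step is the base case, which requires juggling three facts at once: $\des\cap\anc=\emptyset$ (otherwise a common vertex $w$ would yield a cycle $v\leadsto w\leadsto u\to v$ once the new edge is added, contradicting our no-cycle assumption), the dichotomy that $A[t]$ belongs to exactly one of $\anc,\des$ depending on which \textbf{break} statement terminated $\mathsf{Phase}~1$, and the count relation between $|\anc|$ and $|\des|$ from Lemma~\ref{correct-t}. Once those are assembled, the inductive step is a routine bookkeeping argument.
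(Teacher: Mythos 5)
Your proof is correct and takes essentially the same approach as the paper's: induction on the iterations of the while loop with the same case analysis on $A[\leftptr]$. The one place you go beyond the paper is the base case, which the paper dismisses with ``it is easy to see''; your explicit accounting of the empty slots in $A[i..t]$ via the two subcases from Lemma~\ref{correct-t} (whether $A[t]\in\des$ with $|\anc|=|\des|$, or $A[t]\in\anc$ with $|\anc|=|\des|+1$) is a welcome completion of that step.
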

\begin{proof}
It is easy to see that the invariant is true at the beginning of the
subroutine. In other words, at the end of $\mathsf{Phase}~1$, the 
number of empty locations in $A[i..t]$ exactly equals $|\anc|$.

We will now show that this invariant is maintained throughout 
$\mathsf{Phase}~2$.
Whenever $\leftptr$  sees an empty location in $A$,  we
delete an element from $\anc$, hence this invariant is maintained since the 
number of empty locations in $A[i..\leftptr]$ 
decreases by one and so does the size of
$\anc$. Whenever $\leftptr$ sees a predecessor $p$ 
of some element of $\anc$ at the current location, then $p$ is deleted
from its current location $\leftptr = \ell$ in $A$ and $p$
is inserted into $\anc$; the leading element $h$ of $\anc$ gets deleted from 
$\anc$ and we assign $A[\ell] = h$. Thus neither the number of
empty locations in $A[i..\leftptr]$ nor the size of $\anc$ changes by
our deletion and insertion, so the
invariant is maintained. Hence when we exit the while loop, which is
when $\leftptr = i-1$, the queue $\anc$ will be empty.
\end{proof}

We then  traverse the subarray $A[(t+1)..j]$ from location $t+1$ to
location $j$ and execute the algorithm in Algorithm~\ref{fig-algo2} 
with respect to $\des$. For the sake of completeness, we present that 
algorithm as Algorithm~\ref{fig-algo3} below.

\begin{algorithm}[h]
\begin{algorithmic}
\STATE{$\rightptr = t+1$;}
\WHILE{$\rightptr \le j$}
\IF{$A[\rightptr]$ is an empty location} 
\STATE{-- delete the head of the queue $\des$, call it $\ell$, and
set $A[\rightptr] = \ell$}
\ELSIF{$A[\rightptr]$ (call it $r$) is a {\em successor} of some element in 
$\des$}
\STATE{-- insert $r$ to the queue $\des$}
\STATE{ -- delete the  head of the queue $\des$, call it $\ell$, and set $A[\rightptr] = \ell$}
\ENDIF
\STATE{$\rightptr = \rightptr + 1$;}
\ENDWHILE
\end{algorithmic}
\caption{$\mathsf{Phase}~2$ of our algorithm from the side of $\des$.}
\label{fig-algo3}
\end{algorithm}

A lemma analogous to Lemma~\ref{lemma3} will show that there is always enough
room in the array $A[\rightptr..j]$ to accommodate the elements of $\des$
(refer Algorithm~\ref{fig-algo3}).
This completes the description of our algorithm to update the topological
ordering when a new edge is inserted.

\subsubsection{Correctness.}
We would now like to claim that after running $\mathsf{Phase}~1$ and
$\mathsf{Phase}~2$ of our algorithm, we have a valid topological ordering.
Our ordering is defined in terms of the array $A$. Each element $x$ that
has been assigned a new location in $A$ has a new $\ord$ value, which is 
the index of its
new location. For elements that never belonged to $\anc\cup\des$, 
the $\ord$ value is unchanged. For the
sake of clarity, let us call the ordering before the new edge $(u,v)$ was
inserted as $\ord$ and let ${\ord}'$ denote the new function after executing
our algorithm. We will show the following theorem here
(its proof is included in the Appendix).
\begin{theorem}
\label{main-correct}
The function ${\ord}'$ is a valid topological ordering.
\end{theorem}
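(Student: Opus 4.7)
The plan is to verify $\ord'(x)<\ord'(y)$ for every edge $(x,y)$ of the updated graph, by organizing edges first according to whether their endpoints lie inside or outside the active window $[i..j]$, and, inside the window, according to whether each endpoint belongs to $\mathcal{A}$ (vertices ever queued into $\anc$ and hence placed in $A[i..t]$), $\mathcal{D}$ (the analogous class for $\des$, placed in $A[t+1..j]$), or $\mathcal{R}$ (the remaining vertices of $A[i..j]$, which keep their original positions). Edges with both endpoints outside $[i..j]$ are unaffected. Edges straddling the window are handled by the fact that the algorithm never moves a vertex across the boundary of $[i..j]$, together with the fact that the old ordering precludes any edge from a position above $j$ into $[i..j]$ or from $[i..j]$ into a position below $i$. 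The new edge $(u,v)$ is immediate, since $u\in\mathcal{A}$ and $v\in\mathcal{D}$ yield $\ord'(u)\le t<\ord'(v)$.

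For edges $(x,y)$ with both endpoints originally in $A[i..j]$, a short induction on the construction shows that every member of $\mathcal{A}$ is an ancestor of $u$ and every member of $\mathcal{D}$ is a descendant of $v$. Combined with the no-cycle hypothesis that $v$ does not reach $u$ in the old graph, this rules out any $\mathcal{D}\to\mathcal{A}$ edge (which would furnish a path $v\leadsto u$). Consequently $\mathcal{A}\to\mathcal{D}$ edges satisfy $\ord'(x)\le t<\ord'(y)$, no $\mathcal{D}\to\mathcal{A}$ edge exists, and $\mathcal{R}\to\mathcal{R}$ edges inherit the old ordering.

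For edges within $\mathcal{A}$, and symmetrically within $\mathcal{D}$, I plan to exploit the FIFO discipline of the queue. Vertices are appended to $\anc$ in two batches: in $\mathsf{Phase}~1$ we enqueue $u$ first and then vertices of decreasing $\ord$ drawn from $[t,j]$, while in $\mathsf{Phase}~2$ we enqueue further vertices of decreasing $\ord$ drawn from $[i,t-1]$; the two batches are disjoint in $\ord$, so the overall enqueue order matches decreasing $\ord$. Meanwhile $\leftptr$ fills placement slots of $A[i..t]$ in decreasing order of index, so the $k$-th vertex to be enqueued is the $k$-th to be placed and receives the $k$-th largest destination position. Hence the $\ord'$-order of the $\mathcal{A}$-vertices coincides with their $\ord$-order, and any edge $(x,y)$ with $x,y\in\mathcal{A}$ satisfies $\ord'(x)<\ord'(y)$.

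The step I expect to be the main obstacle is the mixed case $\mathcal{R}\to\mathcal{A}$ (and its mirror $\mathcal{D}\to\mathcal{R}$). Suppose $x\in\mathcal{R}$, $y\in\mathcal{A}$, and $(x,y)\in E$. Then $y$ is an ancestor of $u$, and hence so is $x$; this forces $\ord(x)\le t-1$, because any ancestor of $u$ with original position in $[t,j]$ is already in $\anc$ at the end of $\mathsf{Phase}~1$ and therefore in $\mathcal{A}$, while $x\in\mathcal{R}$. Now consider the moment $\leftptr$ visits position $\ord(x)$ in Algorithm~\ref{fig-algo2}. If the final placement position $s_y$ of $y$ satisfied $s_y\le\ord(x)$, then $y$ would still be in the queue at that moment; the test in the algorithm would then recognize $x$ as a predecessor of $y\in\anc$ and displace it, contradicting $x\in\mathcal{R}$. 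Therefore $\ord'(y)=s_y>\ord(x)=\ord'(x)$. The mirror argument, using $\rightptr$ and Algorithm~\ref{fig-algo3}, handles $\mathcal{D}\to\mathcal{R}$. The remaining mixed cases $\mathcal{A}\to\mathcal{R}$ and $\mathcal{R}\to\mathcal{D}$ are easier: every $\mathcal{A}$-vertex is placed at a position strictly below its original $\ord$, and every $\mathcal{D}$-vertex strictly above, so the old inequality $\ord(x)<\ord(y)$ propagates to $\ord'(x)<\ord'(y)$.
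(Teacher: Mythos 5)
Your proof is correct and follows essentially the same route as the paper: a case analysis on whether each endpoint of an edge lies in $\anc$, in $\des$, or is untouched, using the FIFO discipline of the queues together with the monotone sweep of $\leftptr$ (resp.\ $\rightptr$) to compare new positions. You are in fact somewhat more careful than the paper, which explicitly writes out the within-$\anc$, within-$\des$, $\anc\to\des$, and $\des\to\mathcal{R}$ cases but dispatches the $\mathcal{R}\to\anc$ case (your ``main obstacle'') with ``the arguments here are similar''; your timing argument --- that if $y$ were still queued when $\leftptr$ reaches $\ord(x)$ then $x$ would have been displaced, so $\ord'(y)>\ord(x)=\ord'(x)$ --- is exactly the missing detail, and mirrors what the paper does spell out for $\des\to\mathcal{R}$.
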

\begin{proof}
We need to show that $\ord'$ is a valid topological ordering.
Consider any edge $(x,y)$ in the graph. 
We will show that ${\ord}'(x) < {\ord}'(y)$. We will split this into
three cases.
\begin{itemize}
\item {$x \in \anc$.} There are three further cases: (i) $y \in \anc$, 
(ii) $y \in \des$, (iii) $y$ is neither in $\anc$ nor in $\des$. 
In case (i), both $x$ and $y$ are in $\anc$ and since there is an edge
from $x$ to $y$, the vertex $y$ is ahead of $x$ in the queue $\anc$. So
$y$ gets deleted from $\anc$ before $x$ and is hence assigned a higher
indexed location in $A$ than $x$. In other words, ${\ord}'(x) < {\ord}'(y)$.
In case~(ii), we have ${\ord}'(x) \leq t < {\ord}'(y)$. 
In case~(iii), since
elements of $\anc$ move to lower indexed locations in $A$,  we have $\ord'(x) < \ord(x)$
whereas $\ord'(y) = \ord(y)$; since ${\ord}(x) < {\ord}(y)$, we get ${\ord}'(x) < {\ord}'(y)$.

\vspace*{0.2cm}

\item {$x \in \des$.} There are only two cases here: (i) $y \in \des$ or
(ii) $y$ is neither in $\anc$ nor in $\des$. This is because
if $x\in\des$ and $y\in\anc$, then $y \leadsto u \rightarrow v \leadsto x$. 
So $(x,y) \in E$ creates a cycle. This is impossible
since we assumed that after inserting the edge $(u,v)$, $G$ remains a DAG. 
Thus we cannot have $x \in \des$ and $y\in\anc$ for $(x,y)\in E$.

In case (i) here, because $(x,y) \in E$, the vertex $x$ is ahead of the vertex $y$
in the queue $\des$, so
$x$ gets deleted from $\des$ before $y$ and is hence assigned a lower
indexed location in $A$ than $y$. Equivalently, ${\ord}'(x) < {\ord}'(y)$.
In case (ii) here, since $x \in \des$ and $y \notin \des$, it has to be that
either $\ord(y) > \ord(u)$ in which case ${\ord}'(x) < {\ord}(y) = \ord'(y)$, 
or by the time $\rightptr$ visits the
location in $A$ containing $y$, the vertex $x$ is already deleted from the
queue $\des$ - otherwise, $y$ would have been inserted into $\des$. Thus 
${\ord}'(x) < {\ord}(y) = {\ord}'(y)$.

\vspace*{0.2cm}

\item {$x \notin \anc\cup\des$.} There are three cases again here: 
(i) $y \in \anc$, 
(ii) $y \in \des$, (iii) $y$ is neither in $\anc$ nor in $\des$.
The arguments here are similar to the earlier arguments and it is easy to 
check that in all three cases we have ${\ord}'(x) < {\ord}'(y)$.
\end{itemize}
Thus $\ord'$ is a valid topological ordering.
\end{proof}

Thus our algorithm is correct. In Section~\ref{runtime} we will show that
its running time, summed over all edge insertions, is $O(n^{5/2})$.

\section{Running Time Analysis}
\label{runtime}

The main tasks in our algorithm to update $\ord$ to $\ord'$ 
(refer to Algorithms~\ref{fig-algo}, \ref{fig-algo2}, and \ref{fig-algo3}) 
are: \\
(1) moving a pointer $\leftptr$ from
location $j$ to $i$ in the array $A$ and checking if $A[\leftptr]$ is
a {\em predecessor} of any element of $\anc$;\\
(2) moving  a pointer $\rightptr$ from
location $i$ to $j$ in the array $A$ and checking if $A[\rightptr]$ is
a {\em successor} of any element of $\des$;\\
(3) checking at the end of $\mathsf{Phase}~1$, if $(x,y) \in E$
for any $x$ in 
$\{x: i \le \ord(x) \le t \ \mathsf{and} \ v \leadsto x\}$ and 
any $y$ in 
$\{y: t \le \ord(y) \le j \ \mathsf{and} \ y \leadsto u\}$.
(If so, then $(u,v)$ creates a cycle.)

Lemma~\ref{check-G-dag} bounds the cost taken by Step~3 over all iterations.
It can be proved using a potential function argument.

\begin{lemma}
\label{check-G-dag}
The cost for task~3, stated above, summed over
{\em all} edge insertions, is $O(n^2)$.
\end{lemma}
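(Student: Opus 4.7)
The plan is to first bound the cost of one execution of Task~3 by $O(|\anc|\cdot|\des|)$, and then to telescope this across all $m$ insertions using a potential function that counts current ancestor-descendant pairs. The authors' hint of a ``potential function argument'' suggests exactly this style.

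For the per-insertion cost: since the algorithm already maintains the $n\times n$ adjacency matrix $M$, at the end of Phase~1 we can iterate over the $|\des|\cdot|\anc|$ pairs $(x,y)\in\des\times\anc$ and perform one $O(1)$ lookup $M[x,y]$ per pair; the additional check $A[t]\in\des\cap\anc$ takes $O(1)$ extra time. So one execution of Task~3 runs in $O(|\anc|\cdot|\des|)$ time, and the total cost summed over all insertions is $O\bigl(\sum_e |\anc_e|\cdot|\des_e|\bigr)$.

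To bound this sum, I would introduce the potential $\Phi := \bigl|\{(a,b): a\leadsto b \text{ in the current } G\}\bigr|$. Then $\Phi$ is monotone non-decreasing under edge insertions, starts at $0$, and is bounded above by $n(n-1)$. The heart of the proof is the claim that each non-cycle-creating insertion of $(u,v)$ increases $\Phi$ by at least $|\anc|\cdot|\des|$. Fix any $y\in\anc$ and $x\in\des$: by definition $y\leadsto u$ and $v\leadsto x$, so after inserting $(u,v)$ we obtain $y\leadsto u\to v\leadsto x$, giving a new ancestor-descendant relation $y\leadsto x$. The pair $(y,x)$ cannot have been ancestor-descendant beforehand, for if $y\leadsto x$ had already held, validity of the prior topological ordering $\ord$ would force $\ord(y)<\ord(x)$; but $y\in\anc$ implies $\ord(y)\ge t$ while $x\in\des$ implies $\ord(x)\le t$, so $\ord(x)\le t\le\ord(y)$, contradicting $\ord(y)<\ord(x)$, except in the boundary case $\ord(x)=\ord(y)=t$, which forces $x=y=A[t]\in\anc\cap\des$, i.e., precisely the cycle case already excluded by Lemma~\ref{cycle-check}. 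Hence all $|\anc|\cdot|\des|$ ordered pairs in $\anc\times\des$ contribute distinct, freshly created entries to $\Phi$.

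Telescoping yields $\sum_e |\anc_e|\cdot|\des_e|\le \Phi_{\text{final}}-\Phi_{\text{initial}}\le n(n-1) = O(n^2)$, which is the desired bound; the cycle-creating insertion (if any) is the last one and adds at most an extra $O(n^2)$, absorbed into the bound. I expect the main obstacle to be the \emph{newness} part of the claim, since it relies on a slightly delicate use of the range constraints $\ord(y)\ge t\ge\ord(x)$ together with the degeneracy at $A[t]$; but once one notices that this degeneracy is exactly the first alternative in Lemma~\ref{cycle-check}, the argument falls out cleanly.
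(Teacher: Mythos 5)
Your proposal is correct and matches the paper's argument: both amortize the $|\anc|\cdot|\des|$ matrix lookups against newly created ancestor-descendant pairs (your $\Phi$ is exactly the accumulator for the paper's $N(e)$), both use the range constraint $\ord(x)\le t\le\ord(y)$ plus validity of the prior topological order to establish newness, and both absorb the single cycle-creating insertion as a terminal $O(n^2)$ cost. The only difference is presentational (explicit potential-function language versus the paper's per-edge counting and $\sum_e N(e)\le\binom{n}{2}$ bound).
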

\begin{proof}
We need to check if there is an edge $(x,y)$ between  
some $x \in \des$ and some $y \in \anc$. 
We pay a cost of $|\anc|\cdot|\des|$  
for checking $|\anc|\cdot|\des|$ many entries of the matrix $M$. 

Case(i): There is no adjacent pair $(x,y)$ for $x \in \des, y\in \anc$.
Then the cost $|\des|\cdot|\anc|$ can be bounded by
$N(e)$, which is the number of pairs of vertices $(y,x)$ for which the 
relationship $y \leadsto x$ 
has started now for the first time (due to the insertion of the edge $(u,v)$).
Recall that at the end of $\mathsf{Phase}~1$, $\des$ is the set 
$\{x: i \le \ord(x) \le t \ \mathsf{and} \ v \leadsto x\}$ and
$\anc$ is the set 
$\{y: i \le \ord(y) \le t \ \mathsf{and} \ y \leadsto u\}$. Thus
each vertex in this set $\des$ currently has a lower $\ord$ value than
each vertex in $\anc$ - so the only relationship that could have 
existed between such an $x$ and $y$  is $x \leadsto y$, which we have
ensured does not exist. Thus these pairs $(x,y)$ were incomparable and now 
the relationship $y \leadsto x$ has been established.
It is easy to see that
$\sum_{e\in E} N(e)$ is at most $n\choose 2$ since any pair of vertices
can contribute at most 1 to $\sum_{e\in E} N(e)$.

Case(ii): There is indeed an adjacent pair $(x,y)$ for 
$x \in \des, y\in \anc$. Then
we quit, since $G$ is no longer a DAG. The check that showed $G$ to contain
a cycle cost us $|\anc|\cdot|\des|$, which is $O(n^2)$. We pay this cost only 
once as this is the last step of the algorithm. 
\end{proof}

Let $\ord_e$ be our valid topological
ordering before inserting edge $e$ and let $\ord'_e$ be our valid topological
ordering after inserting $e$.  Lemma~\ref{cost-steps-1and2} is our first step in bounding 
the cost taken
for tasks 1 and 2 stated above.
\begin{lemma}
\label{cost-steps-1and2}
The cost taken for tasks 1 and 2, stated above, is $\sum_{x\in V}|\ord_e(x) - \ord'_e(x)|$.
\end{lemma}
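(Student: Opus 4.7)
My plan is to bound the cost of tasks~1 and 2 by counting pointer movements, and then show that the displacements of $u$ and $v$ alone already account for enough of the sum $\sum_x|\ord_e(x)-\ord'_e(x)|$ to absorb this count.

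I would first count pointer steps. Across Algorithms~\ref{fig-algo} and~\ref{fig-algo2}, $\leftptr$ is decremented monotonically from $j$ down to $t$ in Phase~1 and then from $t$ down to $i-1$ in Phase~2, for a total of $(j-t)+(t-i+1) = j-i+1$ decrements. Each decrement does an $O(1)$-amortized adjacency check (by keeping, for each vertex $x$, a counter of its out-neighbors currently in $\anc$, which is updated in bulk whenever a vertex enters or leaves $\anc$, plus a bit-indicator of membership in $\anc$, so that the test ``$A[\leftptr]$ has a successor in $\anc$'' reduces to checking whether that counter is positive). Symmetrically, on the $\rightptr$ side in Algorithms~\ref{fig-algo} and~\ref{fig-algo3}, $\rightptr$ moves from $i$ up to $t$ in Phase~1 and from $t+1$ up to $j+1$ in Phase~2, giving at most $j-i$ steps. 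Hence the combined cost of tasks~1 and 2 is $O(j-i)$.

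I would then lower-bound $\sum_x|\ord_e(x)-\ord'_e(x)|$ by inspecting the movements of $u$ and $v$ alone. Because Phase~2 assigns every element of the final $\anc$ a location in the subarray $A[i..t]$ and $u\in\anc$, we have $\ord'_e(u)\le t$, so $|\ord_e(u)-\ord'_e(u)|\ge j-t$. Symmetrically, every element of the final $\des$ lies in $A[(t+1)..j]$ and $v\in\des$, so $|\ord_e(v)-\ord'_e(v)|\ge (t+1)-i$. Adding just these two contributions yields
\[
\sum_x|\ord_e(x)-\ord'_e(x)| \;\ge\; (j-t)+((t+1)-i) \;=\; j-i+1.
\]

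Putting the two together gives that the cost of tasks~1 and 2 is at most a constant times $\sum_x|\ord_e(x)-\ord'_e(x)|$, which is the statement of the lemma. I expect the main (and only moderately delicate) obstacle to be justifying the $O(1)$ per-step cost of the adjacency check, which is what forces the use of the auxiliary counter/bit-indicator structures above; once that accounting is in place, the lemma reduces to the two easy observations that $\leftptr$ and $\rightptr$ together scan the range $[i,j]$ a constant number of times and that $u$ and $v$ by themselves already contribute $j-i+1$ to the displacement sum.
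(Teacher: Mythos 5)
There is a genuine gap here, and it begins with your first claim: that each pointer step in tasks~1 and 2 costs $O(1)$ amortized, so the total cost is $O(j-i)$. This is not a property of the algorithm the paper analyzes. In the paper's algorithm, the test ``is $A[\leftptr]$ a predecessor of some vertex in $\anc$?'' is performed by reading the adjacency-matrix entries $M[A[\leftptr],w]$ for \emph{every} $w$ currently in the queue $\anc$, which costs $\Theta(|\anc|)$ per step, not $O(1)$ (similarly $\Theta(|\des|)$ on the other side). Your proposal to replace this by a per-vertex counter of out-neighbours in $\anc$ changes the algorithm, and even then the counter maintenance is not free: each time a vertex $w$ enters or leaves $\anc$ you must enumerate all of $w$'s in-neighbours to do the ``bulk'' update, which with only the adjacency matrix costs $\Theta(n)$ per insertion/deletion (or $\Theta(\text{in-degree}(w))$ with in-adjacency lists). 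That hidden maintenance cost is neither $O(j-i)$ nor bounded by $\sum_x|\ord_e(x)-\ord'_e(x)|$ in general -- consider a single inserted vertex of huge in-degree that moves only one position -- so the proposed route cannot establish the lemma for the paper's algorithm, nor even for the modified one.

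The paper instead uses a direct charging argument tailored to the $\Theta(|\anc|)$-per-step cost: when $\leftptr$ is at position $\ell$, each $w$ currently in the queue $\anc$ is charged one unit for the lookup $M[A[\ell],w]$. A vertex $w$ sits in $\anc$ precisely while $\leftptr$ moves from $\ord_e(w)$ down to $\ord'_e(w)$, so $w$ is charged exactly $\ord_e(w)-\ord'_e(w)$ units over the whole run; symmetrically each $y\in\des$ is charged $\ord'_e(y)-\ord_e(y)$, and vertices never entering $\anc\cup\des$ are charged nothing. Summing over all $x\in V$ gives $\sum_x|\ord_e(x)-\ord'_e(x)|$, which is what the lemma asserts. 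Your lower bound $\sum_x|\ord_e(x)-\ord'_e(x)|\ge j-i+1$ (from the displacements of $u$ and $v$ alone) is correct but irrelevant once the $O(j-i)$ upper bound on the cost is abandoned; the key missing ingredient is the per-queue-member charging scheme.
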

\begin{proof}
In Step~1 we find out if $A[\leftptr] = x$ is
a predecessor of any element of $\anc$ by checking the entries
$M[x,w]$ for each $w \in \anc$. Each element $w$ which is currently in
$\anc$ pays unit cost for checking the entry $M[x,w]$.

Any element $w\in\anc$ belongs to the set $\anc$ while
$\leftptr$ moves from location $\ord_e(w)$ to
$\ord'_e(w)$. When $\leftptr = \ord_e(w)$ and we identify $A[\ord_e(w)]=w$ to be 
a predecessor of some element in $\anc$, the vertex $w$ gets inserted into
$\anc$. When $\leftptr$ is at some empty location $\beta$ and 
the vertex $w$ is the head of the queue $\anc$, then $w$ is deleted from $\anc$ 
and we set $A[\beta] = w$, which implies that $\ord'_e(w) = \beta$.
So the total cost paid by $w$ is  $\ord_e(w)-\ord'_e(w)$, which is
to check the entries $M[A[\leftptr],w]$
as the pointer $\leftptr$ moves from location $\ord_e(w)-1$ to $\ord'_e(w)$.

Symmetrically, for any
vertex $y$ that belonged to $\des$ during the course of the
algorithm, the cost paid by $y$ is $\ord'_e(y)-\ord_e(y)$. A vertex $z$
that never belonged to $\anc \cup \des$, does not pay anything and 
we have $\ord'_e(z) = \ord_e(z)$. Thus for each
$x \in V$, the cost paid by $x$ to move the pointers $\leftptr$/$\rightptr$
is $|\ord_e(x) - \ord'_e(x)|$. 
\end{proof}

We will show the
following result in Section~\ref{main-lemma}.
\begin{lemma}
\label{lemma4}
$\sum_{e\in E}\sum_{x \in V}|\ord_e(x)-\ord'_e(x)|$ is $O(n^{5/2})$, where
$\ord_e$ is our valid topological
ordering before inserting edge $e$ and $\ord'_e$ is our valid topological
ordering after inserting $e$.
\end{lemma}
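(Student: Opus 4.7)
The plan is to reduce the sum $\sum_{x}|\ord_e(x)-\ord'_e(x)|$ to a quantity that can be amortized across all edge insertions using the reachability-pair counting argument that already appears in Lemma~\ref{check-G-dag}.  Throughout, let $k_e = |\anc_e|$ and $L_e = j_e-i_e$, where $[i_e,j_e]$ is the range on which the update for edge $e$ operates.

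First, because the Phase~1 balancing guarantees that $|\anc_e|$ and $|\des_e|$ differ by at most one, and Phase~2 simply refills the locations vacated by $\des_e$ with the queue $\anc_e$ (and symmetrically on the other side), a direct accounting yields
\begin{equation*}
\sum_{x\in V}|\ord_e(x)-\ord'_e(x)| \;=\; 2\Delta_e, \qquad \Delta_e := \sum_{w\in\anc_e}\bigl(\ord_e(w)-\ord'_e(w)\bigr).
\end{equation*}
Since every $w\in\anc_e$ moves from a position in $[t_e,j_e]$ to one in $[i_e,t_e]$, each term is at most $L_e$, so $\Delta_e\le k_e L_e$.

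Next, I would reuse the observation already leveraged in Lemma~\ref{check-G-dag}: after the update for edge $e=(u_e,v_e)$, every pair $(w,y)\in\anc_e\times\des_e$ satisfies $w\leadsto u_e\to v_e\leadsto y$, so becomes comparable for the first time.  Since each unordered pair of vertices can cross from incomparable to comparable at most once,
\begin{equation*}
\sum_e k_e^2 \;\le\; \sum_e|\anc_e|\cdot|\des_e| \;\le\; \binom{n}{2}.
\end{equation*}

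The main obstacle is converting the per-edge inequality $\Delta_e\le k_e L_e$ into the global estimate $\sum_e \Delta_e = O(n^{5/2})$, because $L_e$ is not controlled by $k_e$ in general.  My first attempt would be to split insertions by the ratio $L_e/k_e$: for edges with $L_e \le \sqrt{n}\,k_e$ the bound $\Delta_e \le \sqrt{n}\,k_e^2$ sums to $O(n^{5/2})$ immediately via the just-displayed inequality; for edges with $L_e > \sqrt{n}\,k_e$ the sets $\anc_e$ and $\des_e$ are sparse in a wide window, and I would handle them either by Cauchy--Schwarz (which reduces the task to proving $\sum_e L_e^2 = O(n^3)$) or directly via the amortization tool imported from \cite{AFM06} that charges a wide update against the new reachability pairs it creates inside its range.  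The hardest step, and the one I expect to absorb the real content of the analysis, is bounding the contribution of these ``wide, sparse'' updates; this is where I anticipate the external result from \cite{AFM06} is genuinely needed, since the local arguments alone only deliver the weaker $O(n^2\sqrt{m})$ bound.
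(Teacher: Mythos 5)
There is a genuine gap, and you have been candid about where it lies. Your first two steps are correct and match the paper's (1)--(2): $\sum_x |\ord_e(x)-\ord'_e(x)| = 2\Delta_e$ follows from conservation of $\sum_x\ord(x)$, and the observation that every pair in $\anc_e\times\des_e$ becomes comparable for the first time gives $\sum_e k_e^2 \le \sum_e|\anc_e|\cdot|\des_e|\le\binom{n}{2}$. Your narrow case ($L_e\le\sqrt{n}\,k_e$) is also fine: $\sum_e\Delta_e\le\sqrt{n}\sum_e k_e^2=O(n^{5/2})$. But the bound $\Delta_e\le k_e L_e$ is exactly where you discard too much, and none of your suggestions for the wide case recovers what was lost. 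The Cauchy--Schwarz route requires $\sum_e L_e^2=O(n^3)$, which is false in general (there can be $\Theta(n^2)$ invalidating edges each with $L_e=\Theta(n)$, giving $\Theta(n^4)$). The ``charge a wide update against new reachability pairs inside its range'' idea sounds plausible but has no quantitative teeth as stated: a wide, sparse update creates only $k_e\cdot|\des_e|\approx k_e^2$ new pairs, which does not absorb the $k_e L_e$ cost when $L_e\gg k_e$.

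The missing idea is the paper's swap decomposition of the permutation $\pi_e$. Rather than bounding $\Delta_e$ by $k_e L_e$, the paper writes $\Delta_e$ \emph{exactly} as $\sum_{(x,y)\in\pi_e}d(x,y)$, where $\pi_e$ is decomposed into swaps between elements of $\anc_e$ and $\des_e$, and $d(x,y)$ is the positional gap at the moment of the swap. This is not an overestimate --- it is a telescoping identity for the displacement of each ancestor. Then the same comparability argument you invoked shows each ordered pair $(x,y)$ can appear in at most one $\pi_e$ over the entire edge sequence, so $\sum_e\Delta_e=\sum_{(x,y)\text{ swapped}}d(x,y)$, a single sum over pairs of vertices with at most one term per pair. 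This is precisely the object that the AFM06 LP-duality lemma bounds by $O(n^{5/2})$: the LP variables are $X(i,j)=d(x,y)$ indexed by final positions, with $X(i,j)\le n$ and a net-displacement constraint per vertex, and the dual feasible point gives the $n^{5/2}$ value. You cannot apply that lemma to the aggregate quantity $k_e L_e$; it is formulated for per-pair swap distances. So the structural step you skipped --- replacing a per-edge upper bound by an exact per-pair accounting --- is where the real content of the proof sits, and without it the argument does not close.
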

Theorem~\ref{thm1}, stated in Section~\ref{intro}, follows from
 Theorem~\ref{main-correct}, \ Lemmas~\ref{check-G-dag} and \ref{lemma4}. 
Also note that the space requirement of our algorithm is $O(n^2)$, 
since our algorithm uses only the $n\times n$ adjacency matrix $M$, 
the array $A$,
the queues $\anc,\des$, and the pointers $\leftptr,\rightptr$.

\subsection{Proof of Lemma~\ref{lemma4}}
\label{main-lemma}
Let $e=(u,v)$ and let $\ord_e$ be our topological
ordering before inserting $(u,v)$ and $\ord'_e$ our topological
ordering after inserting $(u,v)$.
Our algorithm for updating $\ord_e$ to $\ord'_e$ 
basically permutes the vertices in 
the subarray $A[i..j]$.
The elements which get inserted into the queue $\anc$ move to lower
indexed locations in $A$ (compared to their locations in $A$ before $e$ was added),
elements which get inserted into the queue $\des$ move to higher
indexed locations 
in $A$, and elements which did not get inserted into either $\anc$ or $\des$
remain unmoved in $A$. So our algorithm is essentially a permutation $\pi_e$ of
elements that belonged to $\anc \cup \des$.

Let ${\anc}_e$ denote the ordered set of all vertices that 
get inserted to the set $\anc$ in Algorithms~\ref{fig-algo} 
and \ref{fig-algo2} 
(and of course, later get deleted from $\anc$ in Algorithm~\ref{fig-algo2})
while inserting the edge $e$.
In other words, these are the vertices $w$
for which $\ord_e(w) > \ord'_e(w)$.
Define $\des_e$ as the ordered set of all those vertices $w$ for which 
$\ord_e(w) < \ord'_e(w)$. Equivalently, these are all the vertices that get
inserted into $\des$ in Algorithms~\ref{fig-algo} and \ref{fig-algo3}.
Let $\anc_e = \{u_0,u_1,\ldots,u_k\}$, where $u_0 = u$ and 
$\ord(u_0) > \ord(u_1) > \cdots > \ord(u_k)$, and
let
$\des_e = \{v_0,v_1,\ldots,v_s\}$, where $v_0 = v$, and
$\ord(v_0) < \ord(v_1) < \cdots < \ord(v_s)$.

Let us assume that all the vertices of $\anc_e\cup\des_e$  are in their 
old locations in $A$ (those locations given by the ordering $\ord_e$;
so $A[i] = v$ and
$A[j] = u$). We will now decompose the permutation $\pi_e$ on these elements
into a composition of swaps. 
Note that our algorithm does not perform any swaps, however to prove
Lemma~\ref{lemma4}, it is useful to view $\pi_e$ as a composition of 
appropriate swaps. 
The function $\swap(x,y)$ takes as input: $x \in \anc_e$ and $y \in \des_e$,
where $\ord(x) > \ord(y)$, 
and swaps the occurrences of $x$ and $y$ in the array $A$. That is, 
if $A[h] = x$ and $A[\ell] = y$, where
$h > \ell$, then $\swap(x,y)$ makes $A[\ell] = x$ and $A[h] = y$.

The intuition behind decomposing $\pi_e$ into swaps
between such an element $x \in \anc_e$ and such an element $y \in \des_e$
is that we will have the following
useful property: {\em whenever we
swap two elements $x$ and $y$, it is always the case that
$\ord(x) > \ord(y)$ and 
we will never swap $x$ and $y$ again in the future
(while inserting other new edges)
since we now have the relationship $x\leadsto y$, so $\ord(y) > \ord(x)$
has to hold from now on.}

We will use the symbol 
$\overline{\ord}$ to indicate the {\em dynamic} inverse function of $A$, 
so that $\overline{\ord}$ reflects instantly 
changes made in the array $A$.
So as soon as we swap $x$ and $y$ so that $A[\ell] = x$ and $A[h] = y$,
we will say $\overline{\ord}(x) = \ell$ and $\overline{\ord}(y) = h$. 
Thus the function $\overline{\ord}$ gets initialized to the function
$\ord_e$, it gets updated with every swap that we perform and finally becomes
the function $\ord'_e$.
\subsubsection{3.1.1 Decomposing $\pi_e$ into appropriate swaps.}
\begin{algorithmic}
\STATE{-- Initialize the permutation $\pi_e$ to identity and the function $\overline{\ord}$ to $\ord_e$.}
\FOR{ $x \in \{u_k,u_{k-1},\ldots,u_0\}$  (this is $\anc_e$: elements in reverse order of insertion into $\anc_e$)} 
\FOR{ $y \in \{v_s,v_{s-1},\ldots,v_0\}$  (this is $\des_e$: elements in reverse order of insertion into $\des_e$) }
\IF{$\overline{\ord}(x) > \overline{\ord}(y)$}
\STATE{$\pi_e = \swap(x,y) \circ \pi_e$ \hspace*{0.3in} ($\ast$)}\hspace*{0.2in}
\COMMENT{Note that swapping $x$ and $y$ changes their $\overline{\ord}$ values.}
\ENDIF
\ENDFOR
\ENDFOR
\STATE{-- Return $\pi_e$ (as a composition of appropriate swaps).}
\end{algorithmic}

To get a better insight into this decomposition of $\pi_e$, 
let us take the example
of the element $u_k \in \anc$
($u_k$ has the minimum $\ord_e$ value among all the elements in $\anc$).
Let $v_0,v_1,\ldots,v_r$ 
be the elements of $\des_e$ whose $\ord_e$ value is
less than $\ord_e(u_{k}) = \alpha$. Recall that 
$\ord_e(v_0) < \cdots < \ord_e(v_r) < \ord_e(v_{r+1}) < \cdots \ord_e(v_s)$.
When the outer {\bf for} loop for $x = u_k$
is executed, $u_k$ does not swap with $v_s,\ldots,v_{r+1}$. The first
element that $u_k$ swaps with is $v_r$, then it swaps with $v_{r-1}$,
so on, and $u_k$ finally swaps with $v_0$ and takes the location $i$ in $A$
that was occupied by $v_0$.
Thus ${\ord}'_e(u_k) = i$ and 
$\overline{\ord}(v_0),\ldots,\overline{\ord}(v_r)$ are higher than what they were formerly, since
each $v_{\ell} \in \{v_0,\ldots,v_{r-1}\}$ is currently occupying the
location that was formerly occupied by $v_{\ell+1}$, and $v_r$ is occupying
$u_k$'s old location $\alpha$. Thus the total movement of $u_k$ from
location $\alpha$ to location $i$, can be written as:
\[ \ord_e(u_k) - \ord'_e(u_k) = \alpha - i = \sum_{v_{\ell} \in \{v_r,\ldots,v_0\}}d(u_k,v_{\ell})\]
where $d(u_k,v_{\ell}) = \overline{\ord}(u_k) - \overline{\ord}(v_{\ell})$
when $\swap(u_k,v_{\ell})$ is included in $\pi_e$ (refer to ($\ast$) in Section 3.1.1).

\paragraph{Correctness of our decomposition of $\pi_e$.} It is easy to see 
that the composition of swaps, $\pi_e$, that is returned by the above
method in Section 3.1.1, when applied on 
$\anc_e=\{u_k,u_{k-1},\ldots,u_0\}$ and 
$\des_e=\{v_0,\ldots,v_{s-1},v_s\}$, results in these elements occurring in 
the relative order:
$u_k,u_{k-1},\ldots,$ $u_0,v_0,\ldots,v_{s-1},v_s$ in $A$.
We claim that our algorithm 
(Algorithms~\ref{fig-algo},~\ref{fig-algo2},~\ref{fig-algo3}) places these 
elements in the same order in $A$. This is because
our algorithm maintained both $\anc$ and $\des$ as queues - so elements
of $\anc_e$ (similarly, $\des_e$) do not cross each other, so $u_k,\ldots,u_0$
(resp., $v_0,\ldots,v_s$) will be placed in this order, 
from left to right, in $A$. Also, we insert
all elements of $\anc_e$ in $A[i..t]$
and all elements of $\des_e$ in $A[(t+1)..j]$, thus our algorithm puts
elements of $\anc_e\cup\des_e$ in the order $u_k,u_{k-1},\ldots,$ $u_0,v_0,\ldots,v_{s-1},v_s$ in $A$.
Thus we have
obtained a correct decomposition (into swaps) 
of the permutation performed by our algorithm.

For every pair $(x,y) \in \anc_e\times\des_e$, if $\swap(x,y)$ is included
in $\pi_e$ (see ($\ast$)), define 
$d(x,y) =  \overline{\ord}(x) - \overline{\ord}(y)$, 
the difference in the location indices occupied by $x$
and $y$, when $\swap(x,y)$ gets included in $\pi_e$.
For instance, $d(u_k,v_0) = \ord_e(v_1)-\ord_e(v_0)$ since $u_k$
is in the location $\ord_e(v_1)$ (due to swaps with $v_r,\ldots,v_1$) and
$v_0$ is unmoved in its original location $\ord_e(v_0)$, at the
instant when $\swap(u_k,v_0)$ gets included in $\pi_e$.

Since we broke the total movement in $A$
of any $x \in \anc_e$ (which is $\ord_e(x)-\ord'_e(x)$) into
a sequence of swaps with certain elements in $\des_e$, we have
for any $x \in \anc_e$
\[\ord_e(x) - \ord'_e(x) = \sum_{y:(x,y) \in \pi_e}d(x,y),\]
where we are using ``$(x,y) \in \pi_e$'' to stand for ``$\swap(x,y)$ exists in 
$\pi_e$''.
\begin{eqnarray}
\text{We have} \ \sum_{x \in V}|\ord_e(x)-\ord'_e(x)| & = & \sum_{w\in{\anc}_e}(\ord_e(w)-\ord'_e(w)) + \sum_{y\in{\des}_e}(\ord'_e(y)-\ord_e(y)) \\
& = & 2\sum_{x \in {\anc}_e}(\ord_e(x) - \ord'_e(x))\\
& = & 2\sum_{x \in \anc_e} \ \sum_{y:(x,y)\in\pi_e}d(x,y)\\ 
& = & 2\sum_{(x,y):(x,y)\in\pi_e}d(x,y).
\end{eqnarray}
Equality~(2) follows from (1) because 
$\sum_{x\in V}\ord(x) = 
\sum_{x\in V}\ord'(x)$. Equality~(3) follows from the preceding paragraph.
So the entire running time to insert all edges in $E$ is
$2\sum_{e\in E}\sum_{(x,y):(x,y)\in\pi_e}d(x,y)$. 

We now claim that for any pair $(x,y)$, we can have $(x,y) \in \pi_e$ 
for at most one permutation $\pi_e$.
For $\swap(x,y)$ to exist in $\pi_e$, we need (i)~$(x,y)\in\anc_e\times\des_e$,
and (ii)~$\ord(x) > \ord(y)$.
Once $\pi_e$ swaps $x$ and $y$, 
subsequently $x \leadsto y$ (since $x \leadsto u \rightarrow v \leadsto y$) 
and $\ord(y) > \ord(x)$,
so $(x,y)$ can never again satisfy $\ord(x) > \ord(y)$. 
So for any pair $(x,y)$, $\swap(x,y)$ can occur in at most
one permutation $\pi_e$ over all $e\in E$. 
Thus we have:
\begin{equation}
\label{ineq1}
\sum_{e\in E} \sum_{(x,y): (x,y)\in\pi_e}d(x,y)
 = \sum_{(x,y):(x,y)\in\pi_e \ \text{for some $e$}} d(x,y).
\end{equation}
Note that the summation on the right hand side in Inequality~(\ref{ineq1}) 
is over all those pairs $(x,y) \in V \times V$ 
such that $\swap(x,y)$ exists in some $\pi_e$, for $e \in E$.

The following lemma 
was shown in \cite{AFM06}\footnote{The algorithm in \cite{AFM06} 
performs swaps to obtain a valid topological ordering 
and Lemma~\ref{afm-lemma} is used in their analysis to show an
$O(n^{2.75})$ upper bound for the running time of their algorithm.}.  
This finishes the proof of Lemma~\ref{lemma4}.

\begin{lemma}
\label{afm-lemma}
$\sum d(x,y)$ is $O(n^{5/2})$, where the summation is over all those pairs $(x,y)$
such that $\swap(x,y)$ exists in some permutation $\pi_e, e \in E$.
\end{lemma}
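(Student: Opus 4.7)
The plan is to bound $\sum d(x,y)$ by a threshold argument that leverages the fact, established in the paragraph preceding the lemma, that for any ordered pair $(x,y)$, $\swap(x,y)$ appears in at most one permutation $\pi_e$.

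First I would record the basic counting bounds: the total number of swap pairs across all edge insertions is at most $\binom{n}{2}$, and every $d(x,y)$ is bounded above by $n$ (since $A$ has $n$ locations). Combining these trivially gives $\sum d(x,y) = O(n^3)$, which is too weak; the challenge is to save a factor of $\sqrt{n}$.

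Next, I would fix a threshold $\Delta$ and partition the swap pairs into \emph{short} ones with $d(x,y) \le \Delta$ and \emph{long} ones with $d(x,y) > \Delta$. Short swaps contribute at most $\Delta \cdot \binom{n}{2} = O(\Delta n^2)$ to the sum. For long swaps, if I can show that their number is at most $O(n^2/\Delta)$, then since each such swap contributes at most $n$, their contribution is $O(n^3/\Delta)$. Setting $\Delta = \sqrt{n}$ then balances the two terms and yields the desired bound of $O(n^{5/2})$.

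The main obstacle is proving that the number of long swaps is $O(n^2/\Delta)$. The underlying intuition is structural: when $\swap(x,y)$ occurs with $d(x,y) > \Delta$, at least $\Delta$ distinct vertices lie strictly between $x$ and $y$ in the array at that moment. Because the swap creates the permanent relation $x \leadsto y$ and rearranges the positions of $x$ and $y$ relative to the intervening vertices, one should be able to charge each long swap to $\Omega(\Delta)$ \emph{new} ancestor--descendant relationships that arise from, or are certified by, this swap. Since the universe of ancestor--descendant pairs in the DAG has size at most $\binom{n}{2}$ and each such pair can receive only $O(1)$ charge by a careful accounting, the total number of long swaps is at most $O(n^2/\Delta)$. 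Formalizing this charging scheme is the technical heart of the argument, and it is exactly what is carried out in \cite{AFM06} (whose algorithm performs physical swaps and relies on precisely this estimate to bound its running time).
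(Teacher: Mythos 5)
Your proposed route --- split swaps by whether $d(x,y)>\Delta$ and charge each long swap to $\Omega(\Delta)$ new reachability pairs --- is genuinely different from the paper's proof and, as sketched, does not work. The paper follows \cite{AFM06} and uses a linear-programming argument: set $X(i,j)=d(x,y)$ where $i,j$ are the \emph{final} positions of $x,y$, observe $X(i,j)=0$ for $j\le i$, $0\le X(i,j)\le n$, and crucially $\sum_{j>i}X(i,j)-\sum_{j<i}X(j,i)\le n-1$ for every $i$ (the net displacement of each vertex over the whole run is at most $n-1$), and then exhibit a dual-feasible solution of value $O(n^{5/2})$. The effective ``threshold'' in that dual is on the final-position gap $j-i$, not on the swap-time distance $d(x,y)$, and the long pairs are paid for out of vertex potentials using the net-movement constraint, not by counting new reachability pairs.

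The gap in your argument is the claim that a swap with $d(x,y)>\Delta$ certifies $\Omega(\Delta)$ new ancestor--descendant relations. That is false. Processing an edge $e=(u,v)$ only creates relations $(a,b)$ with $a\leadsto u$ and $v\leadsto b$; a single long swap can create exactly one such pair. Concretely, start with $A=(1,2,\ldots,n)$ and insert $(n,1)$, with $n$ having in-degree $0$, $1$ having out-degree $0$, and all vertices $2,\ldots,n-1$ isolated. The algorithm yields the single swap $\swap(n,1)$ with $d(n,1)=n-1$, but the only new relation is $n\leadsto 1$. You also cannot charge to ordering facts about the intermediate vertices, since the relative array order of two incomparable vertices can flip back and forth across later insertions, so such facts are not a valid $O(1)$-charge universe. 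The ingredient you are missing --- encoded by LP constraint (iii) --- is that a vertex's net movement is bounded by $n-1$, so large leftward motion must be offset by large rightward motion; the ``at most one $\swap(x,y)$ per pair'' fact by itself cannot recover the $\sqrt{n}$ savings over the trivial $O(n^3)$ bound. Finally, \cite{AFM06} does not carry out the charging scheme you describe; it carries out exactly the primal--dual LP calculation reproduced in the paper.
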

\begin{proof}We present the proof of this lemma given in \cite{AFM06}.
We need to show that $\sum_{x,y}d(x,y)$ is $O(n^{5/2})$. 
Let $\ord^*$ denote the final topological ordering. Define
\[ X(\ord^*(x),\ord^*(y)) = \begin{cases} d(x,y)   & \text{if there is some permutation $\pi_e$ that swaps $x$ and $y$}\\
				0 & \text{otherwise.}
\end{cases}\]
Since $\swap(x,y)$ can occur in at most one permutation $\pi_e$, the variable 
$X(i,j)$ is clearly 
defined. Next, we model a few linear constraints on $X(i,j)$, formulate it as 
a linear program and use this LP to prove that 
$\max\{\sum_{i,j} X(i,j)\} = O(n^{5/2})$. By definition of $d(x,y)$ and 
$X(i,j)$, 
\[0\leq X(i,j) \leq n, \mbox{ for all } i,j \in \{1\ldots n\}.\]
For $j \leq i$, the corresponding edges $(\ord^{* -1}(i),\ord^{* -1}(j))$ go 
backwards and thus are never inserted at all. Consequently, 
$$X(i,j) = 0 \mbox{ for all } j \leq i.$$ 
Now consider an arbitrary vertex $w$, which is finally at position $i$, i.e., 
$\ord^*(w) = i$. Over the insertion of all the edges, this vertex has been moved 
left and right via swapping with several other vertices. Strictly speaking, 
it has been swapped left with vertices at final positions $j>i$ and has been 
swapped right with vertices at final position $j<i$. Hence, the overall movement 
to the left is $\sum_{j>i}X(i,j)$ and to the right is $\sum_{j<i}X(j,i)$. 
Since 
the net movement (difference between the final and the initial position) must 
be less than $n$, 
\[ \sum_{j>i}X(i,j) - \sum_{j<i}X(j,i) \leq n \mbox{ for all } 1\leq i \leq n.\]
Putting all the constraints together,
we aim to solve the following linear program.
$$\max \sum_{1\leq i \leq n \mbox{, } 1\leq j \leq n} X(i,j) \mbox{ such that }$$
\begin{enumerate}
\item[(i)] $X(i,j) = 0$ for all $1\leq i \leq n$ and $1 \leq j \leq i$
\item[(ii)] $0 \leq X(i,j) \leq n$ for all $1 \leq i \leq n$ and $i<j\leq n$
\item[(iii)] $\sum_{j>i}X(i,j)-\sum_{j<i}X(j,i) \leq n-1, \mbox{ for all } 1\leq i \leq n$
\end{enumerate}
In order to prove the 
upper bound on the solution to this LP, we consider the dual problem:
\[\min \left[ n\sum_{0 \leq i \leq n\mbox{, }i<j<n} Y_{i \cdot n + j} + n\sum_{0 \leq i < n} Y_{n^2 + i} \right] \mbox{ such that}\]
\begin{enumerate}
\item[(i)] $Y_{i \cdot n + j} \geq 1$ for all $0 \leq i < n$ 
and for all $j \leq i$
\item[(ii)]$Y_{i \cdot n + j} + Y_{n^2 + i} - Y_{n^2 + j} \geq 1$ 
for all $0 \leq i<n$ and for all $j>i$ 
\item[(iii)]$Y_i \geq 0$ for all $0 \leq i<n^2 + n$ 
\end{enumerate}
and the following feasible solution for the dual:
\begin{eqnarray*}
Y_{i \cdot n + j} & = & 1 \mbox{ for all } 0\leq i<n \mbox{ and for all } 0 \leq j \leq i
\\Y_{i \cdot n + j} &=& 1 \mbox{ for all } 0\leq i<n \mbox{ and for all } i<j \leq i+1+2\sqrt{n}
\\Y_{i \cdot n + j} &=& 0 \mbox{ for all } 0\leq i<n \mbox{ and for all } j>i+1+2\sqrt{n}
\\Y_{n^2 + i} &=& \sqrt{n-i} \mbox{ for all } 0\leq i < n.
\end{eqnarray*}
The solution has a value of $n^2 + 2n^{5/2} + n\sum_{i=1}^{n} \sqrt{i} = O(n^{5/2})$, 
which by the primal-dual theorem is a bound on the solution of the original LP. 
This completes the proof of Lemma~\ref{afm-lemma}
and thus Lemma~\ref{lemma4} is proved. 
\end{proof}

\section{The $O((m+n\log n)\sqrt{m})$ algorithm}
\label{new-algo}
In this section we present an algorithm with running time $O((m + n\log n)\sqrt{m})$ 
for online topological ordering.  This algorithm is an adaptation of the
algorithm by Katriel and Bodlaender in \cite{KB05} and uses the 
{\em Ordered List} data structure from \cite{DS87}, also used in \cite{KB05} for this problem.  
That is, the function $\ord$ on $V$ is maintained by a data structure $ORD$
which is a data structure that allows  a total order to be maintained over a list of items.
Each item $x$ in $ORD$ has an associated integer label $\ord(x)$ and 
the label associated with $x$ is smaller than the label associated with
$y$, iff $x$ precedes $y$ in the total order.
The following operations can be performed in constant amortized time 
[see Dietz and Sleator \cite{DS87}, Bender~et~al. \cite{BCDFZ02} for details]: 
the query {\em Order}$(x,y)$ determines whether $x$ precedes $y$ or $y$ precedes
$x$ in the total order (i.e., if $\ord(x) < \ord(y)$ or 
$\ord(y) < \ord(x)$),
{\em InsertAfter}$(x,y)$ ({\em InsertBefore}$(x,y)$) inserts the item $x$ 
immediately after (before) the item $y$ in the total order, and
{\em Delete}$(x)$ removes the item $x$.

When a new edge ($u,v$) is added to a graph $G$, there are two cases: 
(i) either $Order(u,v)$ is true, in which case the current ordering of
elements in $ORD$ is still a valid ordering, so we need to do nothing
except add $(u,v)$ in the list of edges incoming into $u$ and in the
list of edges going out of $v$;
(ii)~$Order(u,v)$ is false, in which case the edge $(u,v)$ is
invalidating and we need to change the order of vertices in $ORD$. 

Our algorithm to insert an invalidating
edge $(u,v)$ performs various {steps}. Each step
involves visiting an ancestor of $u$ and/or visiting a 
descendant of $v$. 

$\bullet$ Initially $u$ is the only ancestor of $u$ that we know. So
we {\em visit} $u$. We use a Fibonacci heap $F_u$ to store
ancestors of $u$ that we have seen but not yet visited.
For an ancestor $x$ of $u$, {\em visit($x$)} means that for every 
edge $(w,x)$ into $x$, we check if $w$ is already present in 
$F_u$ and if $w$ is not present in $F_u$, we insert $w$ into $F_u$.

$\bullet$  The next ancestor of $u$ that we visit is the vertex with the maximum
$ORD$ label in $F_u$. An {\em extract-max} operation on this F-heap 
(the priority of vertices in $F_u$ is determined by how {\em high} their 
associated label is in $ORD$) determines this vertex $x$.

 $\bullet$ Analogously, we have a Fibonacci heap $F_v$ to store 
descendants of $v$ that we have seen but not yet visited.
For any descendant $y$ of $v$, {\em visit($y$)} 
means that for every edge $(y,z)$ out of $y$, we check if $z$ is
already present in the F-heap $F_v$ and if $z$ is not present in $F_v$,
we insert $z$ into $F_v$. The priority of
vertices in $F_v$ is determined by how {\em low} their associated label 
is in $ORD$. Thus an {\em extract-min} operation on this F-heap determines the
next descendant of $v$ that we visit.

$\bullet$ At the end of each step we check if $Order(x,y)$ is true, where
$x$ is the last extracted vertex from $F_u$ and $y$ is the last
extracted vertex from $F_v$. If $Order(x,y)$ is true (i.e, if $x$ precedes
$y$ in $ORD$), then this is the termination step; 
all the ancestors of $u$ that we visited, call them 
$\{u_0(=u),\ldots,u_k\}$ and 
the descendants of $v$ that we visited, call them $\{v_0(=v),\ldots,v_s\}$,
get reinserted in $ORD$ after $x$ or before $y$,
in the order $u_k, \ldots,u_0,v_0,\ldots,v_s$.
Else, i.e, if $y$ precedes $x$ in $ORD$,
then we 
delete $x$ and $y$ from their current positions in $ORD$ and in the next step
we either visit $x$ or $y$ or both $x$ and $y$.

\medskip

In any step of the algorithm, if $\{u_0,u_1,\ldots,u_r\}$ is the set of
ancestors of $u$ that we have already visited (in this order, so
$\ord(u_r) < \cdots < \ord(u_0)$) in the
previous steps, then the ancestor of $u$
that we plan to visit next is the vertex $x$  with the {\em maximum} $ORD$ label 
that has an edge into a vertex in $\{u_0,u_1,\ldots,u_r\}$. Once we visit $x$,
we would have visited all ancestors of $u$ with $ORD$ labels sandwiched between
$\ord(x)$ and $\ord(u)$.
Similarly, on the side
of $v$, if $v_0, v_1,\ldots,v_{\ell}$ are the descendants of $v$ that we 
have already visited (i.e., $\ord(v_0) < \cdots < \ord(v_{\ell})$), then 
the descendant of $v$ that we plan to visit next is the vertex $y$
with the {\em minimum} $ORD$ label which has an 
edge coming from one of $\{v_0, v_1,\ldots,v_{\ell}\}$. 

When $Order(x,y)$ is true, 
it means that we have discovered { {\em all} descendants of $v$ with $ORD$ label values
between $\ord(v)$ and $i$, and
\em all} ancestors of $u$ with $ORD$ label values
between $i$ and $\ord(u)$ (where $i$ is any value such that $\ord(x) \le i \le \ord(y)$).  
Thus we can relocate
vertices $u_k, \ldots,u_0,v_0,\ldots,v_s$ (in this order) between
$x$ and $y$. It is easy to see that now for every
$(a,b) \in E$, we have that $a$ precedes $b$ in $ORD$.

What remains to be explained is how to make the
choice between the following 3 options in each step:
(i) visit($x$) {\em and} visit($y$),
(ii) only visit($x$), or 
(iii) only visit($y$).

\paragraph{Visit($x$) and/or Visit($y$).}
In order to make the choice between visit($x$) and/or visit($y$), 
let us make the following definitions: 
Let $\anc$ denote the set of ancestors of $u$ that we have already visited
plus the ancestor $x$ that we plan to visit next.
Let $\des$ denote the set of descendants of $v$ that we have already visited
plus the descendant $y$ that we plan to visit next.
Let $m_D$ be the sum of out-degrees of vertices in 
$\des$ and let $m_A$ be the sum of in-degrees of vertices in $\anc$.  

If we were to visit
$x$ in the current step, then the {\em total} work done by us on the side of $u$ so far
would be  $m_A + |\anc|\log n$
(to have examined $m_A$ edges incoming into $\anc$ and for at most $m_A$ insertions
in $F_u$, and to have performed $|\anc|$ many {\em extract-max} operations on $F_u$).
Similarly, if we were to visit
$y$ in the current step, then the total work done by us on the side of $v$ so far 
would be $m_D + |\des|\log n$.

\begin{definition}
\label{balance}
If $m_A \leq m_D \leq m_A + |\anc|\log{n}$ \ or \  
$m_D \leq m_A \leq m_D + |\des|\log{n}$ then we say that
 $m_A$ and $m_D$ are ``balanced'' with respect
to each other. Else we say that they are not balanced 
with respect to each other.
\end{definition}

If $m_A$ and $m_D$ are balanced with respect to each other, then we visit both $x$ and $y$.
Else if $m_A < m_D$, then we visit only $x$, otherwise we visit only $y$.
This is the difference between our algorithm and the algorithm in \cite{KB05} - in the
latter algorithm, either only $x$ is visited or only $y$ is visited unless $m_D = m_A$,
in which case both $x$ and $y$ are visited. In our algorithm we are ready to visit
both $x$ and $y$ more often, that is, whenever $m_A$ and $m_D$ are ``more or less'' equal
to each other.
If we visit both $x$ and $y$,
then the total work done is $m_A + |\anc|\log{n} +  
m_D + |\des|\log{n}$. We can give a good upper bound for this quantity 
using the fact that $m_A$ and $m_D$ are balanced with respect to each other. 
On the other hand, if 
$m_A$ and $m_D$ are not balanced w.r.t. each other, we are not able to give a good upper bound
for the {\em sum} of $(m_A + |\anc|\log{n})$ and $(m_D + |\des|\log{n})$, hence we visit either
$x$ or $y$, depending upon the smaller value in $\{m_A,m_D\}$.

\subsection{The algorithm}
\label{label2}
Our entire algorithm to reorder vertices in $ORD$ upon the insertion of an
invalidating edge $(u,v)$ 
is described as Algorithm~\ref{fig-algo4}. 
This algorithm is basically an implementation  
of what was described in the previous section with a check at the
beginning of every step to see if $m_A$ and $m_D$ are balanced with respect to
each other or not. If they are, then we visit both $x$ and $y$. Else, we visit only
one of them ($x$ if $m_A < m_D$, else $y$). The algorithm maintains
the invariant that the $ORD$ labels of all elements in $\anc$ are higher
than the $ORD$ labels of all elements in $\des$. The termination condition is
determined by $Order(x,y)$ being true, where $x$ is the last extracted vertex from
$F_u$ and $y$ is the last extracted vertex from $F_v$.

For simplicity, in the description of the algorithm 
we assumed that the heaps $F_u$ and $F_v$ remain non-empty 
(otherwise extract-max/extract-min operations would return null values) 
- handling these cases is easy. 
We also assumed that the edges inserted are the edges of a DAG.
Hence we did not perform any cycle detection here. 
(Cycle detection can be easily incorporated, by using 2 flags for each vertex 
that indicate its membership in $F_u$ and in $F_v$.)
When our algorithm terminates, it is easy to see the order of vertices in 
$ORD$ is a valid topological ordering.
We discuss the running time of Algorithm~\ref{fig-algo4} in the next section.

\begin{algorithm}[h]
Initially, $\anc = \{u\}$ and $\des = \{v\}$.

Set $x = u$ and $y = v$. Delete $x$ and $y$ from their current locations in $ORD$.

Set $m_A = u$'s in-degree and $m_D = v$'s out-degree.

\begin{algorithmic}
\WHILE{$\mathsf{TRUE}$}
\IF{$m_A$ and $m_D$ are balanced w.r.t. each other (see Defn.~\ref{balance})}
\STATE{Visit($x$) and Visit($y$).}

\COMMENT{So new vertices get inserted into $F_u$ and into $F_v$.}
\STATE{$x = ${\em extract-max}($F_u$)}
\STATE{$y = ${\em extract-min}($F_v$)}
\IF{$\ord(x) < \ord(y)$}
\STATE{-- insert all elements of $\anc$ (with the same relative order within
themselves) followed by all \hspace*{0.3cm} elements of $\des$ (with the same relative order) 
after $x$ in $ORD$}

\STATE{{\bf break}}
\COMMENT{This terminates the while loop}
\ELSE
\STATE{Delete $x$ and $y$ from their current positions in $ORD$.}
\STATE{$\anc = \anc \cup \{x\}$ and $\des = \des \cup \{y\}$}
\STATE{$m_A = m_A + x$'s in-degree and $m_D = m_D + y$'s out-degree}
\ENDIF
\ELSIF{$m_A < m_D$}
\STATE{Visit($x$)}
\STATE{$x = ${\em extract-max}($F_u$)}
\IF{$\ord(x) < \ord(y)$}
\STATE{-- insert all elements of $\anc$ followed by all elements of $\des$ after $x$ in $ORD$.}

\STATE{{\bf break}}
\ELSE
\STATE{Delete $x$ from its current position in $ORD$.} 
\STATE{Set $\anc = \anc \cup \{x\}$ and $m_A = m_A + x$'s in-degree.}
\ENDIF
\ELSE
\STATE{Visit($y$)}
\STATE{Let $y = ${\em extract-max}($F_v$)}
\IF{$\ord(x) < \ord(y)$}
\STATE{-- insert all elements of $\anc$ followed
by all elements of $\des$ before $y$ in $ORD$.}

\STATE{{\bf break}}
\ELSE
\STATE{Delete $y$ from its current position in $ORD$.} 
\STATE{Set $\des = \des \cup \{y\}$ and $m_D = m_D + y$'s out-degree.}
\ENDIF
\ENDIF
\ENDWHILE
\end{algorithmic}
\caption{Our algorithm to reorder vertices in $ORD$ upon insertion of an invalidating edge $(u,v)$.}
\label{fig-algo4}
\end{algorithm}

\subsubsection{The running time}
\label{new-analysis}
 Let $T(e)$ denote the time taken by Algorithm~\ref{fig-algo4}
to insert an edge $e$. We need to show an upper bound for $\sum_e T(e)$,
where the sum is over all invalidating edges $e$.
For simplicity of exposition, let us define the following modes.
While inserting an edge $(u,v)$,
if a step of our algorithm involved visiting an ancestor of
$u$ {\em and} a descendant of $v$, we say that step was performed in {\em mode~(i)}.
That is, at the beginning of that step, we had $m_A$ and $m_D$ balanced with
respect to each other.
If a step involved visiting only an ancestor of $u$,
then we say that the step was performed in {\em mode~(ii)}, else we say that
the step was performed in {\em mode~(iii)}.

We partition the sum $\sum_e T(e)$
into 2 parts depending upon the {\em mode} of the termination step 
of our algorithm.
Let $S_1 = \sum_e T(e)$ be
the time taken by our algorithm over all those edges $e$
such that the termination step was performed in mode~(i).
Let $S_2 = \sum_e T(e)$ where the sum is over all those
edges $e$ such that the termination step was performed in mode~(ii)
or mode~(iii).
We will show that both  $S_1$ and $S_2$ are $O((m + n\log n)\sqrt{m})$.
These bounds on $S_1$ and $S_2$ will prove Theorem~\ref{thm2} stated
in Section~\ref{intro}.

The following lemma shows the bound on $S_1$. 
We then show an analogous bound on $S_2$.
\begin{lemma}
\label{last-lemma}
$S_1$ is $O((m+n\log n)\sqrt{m})$.
\end{lemma}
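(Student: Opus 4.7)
The plan is to first express $T(e)$ in the useful form
$T(e) = O(m_A(e) + m_D(e) + (|\anc_e|+|\des_e|)\log n)$,
where the three summands correspond, respectively, to the in-edges of $\anc_e$ examined during the Visit operations (inserted at amortized $O(1)$ cost into $F_u$), the symmetric out-edges of $\des_e$, and the $|\anc_e|+|\des_e|$ extract-max / extract-min operations, each of amortized cost $O(\log n)$. Because the termination step is in mode~(i), the balance condition yields $|m_A(e)-m_D(e)| \le (|\anc_e|+|\des_e|)\log n$, so $m_A(e)+m_D(e) \le 2\min(m_A(e),m_D(e)) + (|\anc_e|+|\des_e|)\log n \le 2\sqrt{m_A(e)\,m_D(e)} + (|\anc_e|+|\des_e|)\log n$. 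Hence $T(e) = O(\sqrt{m_A(e)\,m_D(e)}) + O((|\anc_e|+|\des_e|)\log n)$.

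The central combinatorial ingredient is: for every ordered pair $(w,y)\in V\times V$, the pair can lie in $\anc_e\times\des_e$ for \emph{at most one} invalidating edge $e$. Indeed, once such an edge $e=(u,v)$ has been processed, $w\leadsto u\to v\leadsto y$, so $w\leadsto y$ and hence $\ord(w)<\ord(y)$ permanently; but $w\in\anc_{e'}$ and $y\in\des_{e'}$ at a later $e'$ would require $\ord(y)<\ord(w)$ at that time, a contradiction. Two consequences follow. First, $\sum_e|\anc_e|\,|\des_e|\le n^2$. Second, writing $m_A(e)\,m_D(e)=\sum_{(w,y)\in\anc_e\times\des_e}\mathrm{indeg}(w)\,\mathrm{outdeg}(y)$ and exchanging the order of summation, the uniqueness above yields $\sum_e m_A(e)\,m_D(e)\le\sum_{(w,y)\in V\times V}\mathrm{indeg}(w)\,\mathrm{outdeg}(y)=m^2$. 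Since there are at most $K\le m$ mode~(i) terminating edges, Cauchy--Schwarz gives $\sum_e\sqrt{m_A(e)\,m_D(e)}\le\sqrt{K\cdot\sum_e m_A(e)\,m_D(e)}\le\sqrt{m\cdot m^2}=m^{3/2}$, which handles the first summand.

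It then remains to show $\log n\cdot\sum_e(|\anc_e|+|\des_e|)=O(n\log n\sqrt m)$, and this is the part I expect to be the main obstacle. The clean case is when $|\anc_e|$ and $|\des_e|$ lie within a constant factor of one another: AM--GM then gives $|\anc_e|+|\des_e|=O(\sqrt{|\anc_e|\,|\des_e|})$, and Cauchy--Schwarz combined with the pair-count bound yields $\sum_e(|\anc_e|+|\des_e|)\le O(\sqrt{K\cdot\sum_e|\anc_e|\,|\des_e|})\le O(\sqrt{m\cdot n^2})=O(n\sqrt m)$. The subtlety is that the balance condition controls the work quantities $m_A,m_D$ rather than the set sizes $|\anc_e|,|\des_e|$, which can diverge through preceding mode~(ii) or mode~(iii) steps on vertices with very uneven degrees. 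I plan to resolve this by splitting each mode~(i) terminating insertion into two subcases: when the average in-/out-degree inside $\anc_e\cup\des_e$ is at least $\log n$, the extract cost $(|\anc_e|+|\des_e|)\log n$ is dominated by $m_A(e)+m_D(e)$ and is absorbed into the $O(m^{3/2})$ bound already established; otherwise, the bound $|\anc_e|\le m_A(e)+1$ (each non-$u$ vertex of $\anc_e$ contributes a distinct internal edge to $m_A(e)$, and similarly for $\des_e$) together with the small average degree forces the set sizes to track new-pair creation, so Cauchy--Schwarz against $\sum_e|\anc_e|\,|\des_e|\le n^2$ recovers the $O(n\sqrt m)$ bound. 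Summing the two contributions gives $S_1 = O(m^{3/2}) + O(n\log n\sqrt m) = O((m+n\log n)\sqrt m)$, as required.
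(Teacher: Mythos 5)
Your handling of the degree part of the work is sound and mirrors the paper: you write $T(e) = O(m_A + m_D + (|\anc_e|+|\des_e|)\log n)$, use the balance condition to squeeze $m_A+m_D$ into $O(\sqrt{m_A m_D}) + O((|\anc_e|+|\des_e|)\log n)$, observe that each ordered pair $(w,y)$ appears in $\anc_e\times\des_e$ for at most one $e$ (whence $\sum_e m_A m_D \le m^2$ and $\sum_e |\anc_e||\des_e|\le n^2$), and apply Cauchy--Schwarz. The bound $\sum_e\sqrt{m_A m_D}\le m^{3/2}$ is precisely the paper's $m_A^2 \le m_A m_D \le \Phi(e_i)$ step in thinly disguised form. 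So the $O(m^{3/2})$ piece is correct and essentially identical to the paper's.

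The genuine gap is exactly where you flagged it: bounding $\sum_e(|\anc_e|+|\des_e|)\log n$ when the two set sizes are far apart. Your proposed split by average degree does not close it. If the average degree in $\anc_e\cup\des_e$ is at least $\log n$, you want to absorb the extract cost into $m_A+m_D$, but your only control on $\sum(m_A+m_D)$ is through $\sqrt{m_A m_D}$, and the balance condition $|m_A-m_D|\le(|\anc_e|+|\des_e|)\log n$ gives nothing once the right side is allowed to be of order $m_A+m_D$: the inequality $m_A+m_D\le 2\sqrt{m_A m_D}+(|\anc_e|+|\des_e|)\log n$ becomes vacuous. In the low-degree branch, $|\anc_e|\le m_A+1$ only tells you $|\anc_e|\approx m_A$; pushing it through the pair-count bound controls $\min(|\anc_e|,|\des_e|)$ (which is $\le\sqrt{|\anc_e||\des_e|}\le\sqrt{N(e)}$), but leaves $\max(|\anc_e|,|\des_e|)$ uncontrolled, and the natural patches produce an extra factor of $\log n$.

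What the paper does differently here is the key missing idea. Assuming WLOG $|\anc_e|>|\des_e|$, it looks at the \emph{last step} $t$ that was executed in mode~(ii); at the start of that step $m'_A + |\anc'|\log n < m'_D$. After step $t$ every extraction from $F_u$ is matched by one from $F_v$, so $|\anc_e|\le|\anc'|+|\des_e|$. The term $|\des_e|\log n$ is handled by $\min^2\le N(e_i)$, and the leftover $|\anc'|\log n$ is handled by the new inequality $|\anc'|\log n\le m'_D$, which gives $(|\anc'|\log n)^2\le|\anc'| m'_D\log n\le\Psi(e_i)\log n$ where $\Psi(e_i)$ counts newly formed vertex-to-edge ancestor relations (with $\sum_i\Psi(e_i)\le mn$). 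This yields $\sqrt{mn\log n}\cdot\sqrt{m}$, which is then absorbed by AM--GM into $(m+n\log n)\sqrt m$. Your sketch does not introduce $\Psi$ (or any vertex-times-edge potential) nor the decomposition $|\anc_e|\le|\anc'|+|\des_e|$, and it is precisely this decomposition that lets the imbalanced part of $|\anc_e|$ be charged against the opposite side's degree work rather than against vertex-pair creation alone.
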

\begin{proof}
Let us consider any particular edge $e_i = (u,v)$
such that the last step of Algorithm~\ref{fig-algo4} while inserting $e_i$ 
was performed in mode~(i).
So the termination step involved visiting an ancestor $u_k$ of $u$, extracting the
next ancestor $x$ of $u$, visiting a descendant $v_s$ of $v$, extracting the
next descendant $y$ of $v$ and then checking that $x$ precedes $y$ in $ORD$.

Let the set $\anc = \{u,u_1,\ldots,u_k\}$ and
the set $\des = \{v,v_1,\ldots,v_s\}$. 
Let $m_A$ be the sum of in-degrees of vertices in $\anc$
and let $m_D$ be the sum of out-degrees of vertices in $\des$.
During all the steps of the algorithm, 
we extracted $|\anc|$ many vertices (the vertices $u_1,\ldots,u_k$ and $x$) 
from $F_u$ and $|\des|$ many vertices from $F_v$.
So we have
$T(e_i)$ is $O(m_A + |\anc|\log n + m_D + |\des|\log n)$.

Since the termination step
was performed in mode~(i), we have that
$m_A \le m_D \le m_A + |\anc|\log n$ or $m_D \le m_A \le m_D + |\des|\log n$.
Without loss of generality let us assume that $m_A \le m_D \le m_A + |\anc|\log n$. 
Hence $T(e_i)$ can be upper bounded by some constant times
\begin{equation}
\label{EQN0}
m_A + |\anc|\log n + |\des|\log n.
\end{equation}

Let us assume that $|\anc| > |\des|$. (Note that the case $|\anc| < |\des|$ is symmetric
to this and the case $|\anc| = |\des|$ is the easiest.)
Since the termination step was performed in mode~(i), for $|\anc|$ to be larger than $|\des|$,
it must be the case that
at some point in the past, our algorithm to insert $e_i$ was operating in mode~(ii)
and that contributed to accumulating quite a few ancestors of $u$. 
Let step $t$ be the last step that was operated in mode~(ii). 
So at the beginning of step $t$ we had $m'_A + |\anc'|\log n \le m'_D$,
where $\anc'$ was the set of ancestors of $u$
extracted from the F-heap $F_u$ till the beginning of step $t$ and 
$m'_A$ is the sum of in-degrees of vertices in $\anc'$,
and $m'_D$ is the sum of out-degrees of vertices in $\des'$ where
$\des'$ was the set of descendants of $v$
extracted from the F-heap $F_v$ till the beginning of step $t$.
After step~$t$, we never operated our algorithm in mode~(ii). Thus
subsequent to step~$t$ whenever we extracted a vertex from $F_u$,
we also extracted a corresponding vertex from $F_v$. So we have
$|\anc| \le |\anc'| + |\des|$. 
Using this inequality in (\ref{EQN0}),
we get that 
\[T(e_i) \le c(m_A + |\des|\log n + |\anc'|\log n), \ \ \ \text{for some constant $c$}.\]

\begin{claim}
We have the following relations: 
\begin{itemize}
\item $m_A^2 \le m_A \cdot m_D \le \Phi(e_i)$

where $\Phi(e_i)$ is the number of pairs of edges $(e,e')$ for which
the relationship $e \leadsto e'$ has started now for the first time due to
the insertion of $e_i$. 
[We say $(a,b) \leadsto (c,d)$ if $b$ is an ancestor of $c$.]

\item $|\des|^2 \le |\anc|\cdot|\des| \le N(e_i)$

where $N(e_i)$ is the number of pairs of vertices $(w,w')$ such that 
$w \leadsto w'$ has started now for the first time due to
the insertion of $e_i$. [We say $w \leadsto w'$ if $w$ is an ancestor of $w'$.]

\item $(|\anc'|\log n)^2 \le (|\anc'|\log n) \cdot m'_D$ $\le \Psi(e_i)\log n$

where $\Psi(e_i)$ is the number of pairs $(w,e) \in V \times E$
for which the relationship $w \leadsto e$ has started now for the first time.
[We say $w \leadsto (a,b)$ if $w$ is an ancestor of $a$.]
\end{itemize}
\end{claim}
\noindent{\em Proof of Claim 1.}
After the insertion of edge $(u,v)$ we have
$e \leadsto e'$ for every edge $e$ incoming into $\anc = \{u,u_1,\ldots,u_k\}$ 
and every edge $e'$ outgoing from $\des =  \{v,v_1,\ldots,v_s\}$.
Prior to inserting $e_i$,
the {\em sink} of each of the $m_A$
edges incoming into $\anc$ has a higher $ORD$ label compared to
the {\em source} of each of the $m_D$ edges outgoing from $\des$ - thus
we could have had no relation 
of the form $e \leadsto e'$ between the $m_A$ edges
incoming into $\anc$ and the $m_D$ edges outgoing from $\des$.
So $\Phi(e_i) \ge m_Am_D$. 

The above argument also shows that $N(e_i) \ge |\anc|\cdot|\des|$.
We have
$|\anc'|m'_D \le \Psi(e_i)$ because the source of each of these $m'_D$
edges had a lower $ORD$ label than the vertices in $\anc'$ prior to
inserting $e_i$; thus the relation $w \leadsto e'$ for each $w \in \anc'$
and the edges $e'$ ($m'_D$ many of them) outgoing from $\des'$
is being formed for the first time now.  \hspace{\fill} \qed

\medskip 

Now we are ready to finish the proof of Lemma~\ref{last-lemma}.
Corresponding to the insertion of each edge $e_j$ whose termination
step was in mode~(i), the work done by our algorithm is at most
$c(f_j + g_j + h_j)$ where $f_j^2 \le \Phi(e_j)$ and $g_j^2 \le N(e_j)\log^2n$
and $h_j^2 \le \Psi(e_j)\log n$.  In order to bound 
$\sum_j  (f_j + g_j + h_j)$, we use Cauchy's inequality
which states that $\sum_{i=1}^m x_i \le \sqrt{\sum_i x^2_i}\sqrt{m}$, 
for $x_1,\ldots,x_m \in \mathbb{R}$. 
This yields
\begin{eqnarray}
\sum_j  f_j + g_j + h_j & \le & (\sqrt{\sum f_j^2} + \sqrt{\sum g_j^2} +  \sqrt{\sum h_j^2})\sqrt{m}\\
& \le & \left(\sqrt{\sum \Phi(e_j)} + \sqrt{\sum N(e_j)}\log n +  \sqrt{\sum \Psi(e_j)\log n}\right)\sqrt{m}\\
& \le & (m + n\log n + \sqrt{mn\log n})\sqrt{m}.
\end{eqnarray}
We have $\sum_j \Phi(e_j)$ is at most $m \choose 2$
since each pair of edges $e$ and $e'$ 
can contribute at most 1 to $\sum_j \Phi(e_j)$;
similarly $\sum_j \Psi(e_j)$ is at most $mn$, and $\sum_j N(e_j)$ is at most $n\choose 2$.
This yields Inequality~(9)
from (8).
Since $\sqrt{mn\log n} \le (m + n\log n)/2$,
this completes the proof that the sum
$S_1$ is $O((m+n\log n)\sqrt{m})$. 
\end{proof}

Analogous to Lemma~\ref{last-lemma}, we need to show the following lemma in order
to bound the running time of Algorithm~\ref{fig-algo4} by $O((m + n\log n)\sqrt{m})$.
\begin{lemma}
\label{lem9}
$S_2$ is $O((m + n\log n)\sqrt{m})$.
\end{lemma}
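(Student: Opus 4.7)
The plan is to mirror the proof of Lemma~\ref{last-lemma}. By the full symmetry between modes~(ii) and~(iii) in Algorithm~\ref{fig-algo4}, it suffices to bound the sub-sum over those edges whose termination step is in mode~(ii); the mode~(iii) case is obtained verbatim by swapping the roles of $\anc$ and $\des$. For any such edge $e_i = (u,v)$, the termination condition $m_D > m_A + |\anc|\log n$ collapses $T(e_i) = O(m_A + |\anc|\log n + m_D + |\des|\log n)$ into $T(e_i) = O(m_D + |\des|\log n)$, so it suffices to bound $\sum_j m_D^{(j)}$ and $\sum_j |\des^{(j)}|\log n$ and finally combine via Cauchy's inequality exactly as in equations~(7)--(9).

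For $\sum_j (m_D^{(j)})^2$ I plan a two-case split per edge. If $m_A \ge m_D/2$, then $m_D^2 \le 2 m_A m_D \le 2\Phi(e_i)$. Otherwise $m_A < m_D/2$, and the mode~(ii) inequality forces $|\anc|\log n > m_D/2$, which gives $m_D^2 < 2\log n\cdot|\anc|m_D \le 2\log n\cdot\Psi(e_i)$. In both sub-cases $(m_D^{(j)})^2 = O(\Phi(e_j) + \log n\cdot\Psi(e_j))$, and the Claim~1 global bounds $\sum_j \Phi(e_j) \le \binom{m}{2}$ and $\sum_j \Psi(e_j) \le mn$ give $\sum_j (m_D^{(j)})^2 = O(m^2 + mn\log n)$.

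For $\sum_j (|\des^{(j)}|\log n)^2$ I again split per edge. If $|\des| \le |\anc|$, then $|\des|^2\log^2 n \le |\anc||\des|\log^2 n \le \log^2 n\cdot N(e_i)$, which sums to $O(n^2\log^2 n)$. Otherwise $|\des| > |\anc|$; since $|\anc|$ grows in every mode~(i) and non-terminating mode~(ii) step while $|\des|$ grows only in mode~(i) and mode~(iii) steps, $|\des|>|\anc|$ forces at least one mode~(iii) step during the insertion of $e_i$. Let $t$ be the last such step and $(m_A^*, m_D^*, |\anc^*|, |\des^*|)$ its starting state. The mode~(iii) inequality gives $|\des^*|\log n < m_A^*$, and a symmetric analogue $|\des^*|m_A^* \le \Psi'(e_i)$ of Claim~1 (where $\Psi'(e_i)$ counts the new (edge,vertex) reachabilities in which the edge's sink has just become an ancestor of the vertex, so $\sum_j \Psi'(e_j) \le mn$) yields $(|\des^*|\log n)^2 \le \log n\cdot\Psi'(e_i)$. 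The residual growth $|\des| - |\des^*|$ after step $t$ comes exclusively from mode~(i) steps, each of which also increments $|\anc|$, so it is at most $|\anc|$; combined with the mode~(ii)-at-termination bound $|\anc|\log n < m_D$ this gives $((|\des|-|\des^*|)\log n)^2 \le (|\anc|\log n)^2 \le \log n\cdot\Psi(e_i)$. Summing across edges, $\sum_j (|\des^{(j)}|\log n)^2 = O(n^2\log^2 n + mn\log n)$.

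A final application of Cauchy's inequality, exactly as in the proof of Lemma~\ref{last-lemma}, yields $\sum_j m_D^{(j)}$ and $\sum_j |\des^{(j)}|\log n$ each bounded by $\sqrt{O(m^2 + mn\log n + n^2\log^2 n)}\cdot\sqrt{m} = O((m + n\log n)\sqrt{m})$, proving $S_2 = O((m + n\log n)\sqrt{m})$. I expect the main obstacle to lie in the sub-case $|\des|>|\anc|$: unlike in Lemma~\ref{last-lemma}, where the pivot (last mode~(ii) step) was forced immediately by the WLOG assumption, here one must locate the last mode~(iii) step as the correct pivot, verify the symmetric $\Psi'$-style first-time-reachability bound at that historical moment of the algorithm, and then control the post-pivot growth of $|\des|$ through $|\anc|$ using the absence of further mode~(iii) steps.
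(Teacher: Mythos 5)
Your bound on $\sum_j (|\des^{(j)}|\log n)^2$ via a pivot at the last mode~(iii) step is essentially sound, and it parallels the paper's use of $\des''$. But there is a genuine error in the companion bound on $\sum_j (m_D^{(j)})^2$. In the case $m_A < m_D/2$, you claim the mode~(ii) termination inequality $m_D > m_A + |\anc|\log n$ ``forces'' $|\anc|\log n > m_D/2$. It does not: that inequality is equivalent to $|\anc|\log n < m_D - m_A$, which is an \emph{upper} bound on $|\anc|\log n$, not a lower bound. To get the lower bound you want, you would need the \emph{reversed} inequality $m_D \le m_A + |\anc|\log n$, i.e.\ the balance condition -- but precisely because the termination step is in mode~(ii), that balance condition fails at termination. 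So case~2 of your $m_D$-bound breaks down, and there is no way to repair it while working with the final values $m_D, |\anc|$: the last vertex extracted from $F_v$ is never visited when termination is in mode~(ii), so its out-degree inflates $m_D$ without contributing to the actual work, and $m_D$ can exceed $m_A + |\anc|\log n$ by an amount that is not controlled by $\Phi(e_i)$ or $\Psi(e_i)$.

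The paper sidesteps this by never letting the final $m_D$ enter the bound. It identifies the last step $t$ operated in mode~(i) or (iii), takes $(m'_A, m'_D, \anc', \des')$ at the \emph{beginning} of step~$t$, and uses the balance/mode-(iii) condition that held \emph{there}: $m'_D \le m'_A + |\anc'|\log n \le m_A + |\anc|\log n$. This absorbs the $v$-side edge cost into the $u$-side cost and yields $T(e_i) = O(m_A + |\anc|\log n + |\des'|\log n)$; then $(m_A + |\anc|\log n)^2 \le (m_A + |\anc|\log n)\,m_D \le \Phi(e_i) + \Psi(e_i)\log n$ holds \emph{because} $m_A + |\anc|\log n < m_D$ (the mode~(ii) termination inequality is used in the correct direction: to bound the $u$-side cost by $m_D$, not to bound $m_D$ by the $u$-side cost). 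You applied exactly this pivot idea for $|\des|$, but omitted it for $m_D$; introducing the analogous pivot to bound the $v$-side edge cost, rather than trying to bound the final $m_D$ directly, is the missing step.
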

\begin{proof}
Recall that $S_2 = \sum T(e)$ where the sum is over all those $e$
such that the termination step of Algorithm~\ref{fig-algo4} was
performed in mode~(ii) or mode~(iii).  
Let us further partition this sum into $\sum_e T(e)$ 
over all those $e$ for which the last step
was performed in mode~(ii) and $\sum_{e'} T(e')$ 
over all those $e'$ for which the last step
was performed in mode~(iii).  The analysis for the second
sum will be entirely symmetric to the first.
We will now bound the first sum.

Let $e_i = (u,v)$ be an edge such that the termination
step of our algorithm was performed in mode~(ii).
Let the set $\anc = \{u,u_1,\ldots,u_k\}$ 
and let the set $\des = \{v,v_1,\ldots,v_s\}$.
Let $m_A$ be the sum of in-degrees of vertices in $\anc$
and let $m_D$ be the sum of out-degrees of vertices in $\des$.
Since the termination step was performed in
mode~(ii), we have $m_A + |\anc|\log n < m_D$.

The work that we did in all the steps while inserting $(u,v)$ 
from the side of the vertex $u$ is $m_A + |\anc|\log n$.
Let step~$t$ be the last step of our algorithm which was operated in
mode~(i) or in mode~(iii). If there was no such step, then the total
work done is at most $m_A + |\anc|\log n$ and it is easy to bound this
using the inequality $m_A + |\anc|\log n < m_D$.
Hence, let us assume that such a step~$t$ did exist and let
$\des'$ be the set of descendants of $v$ at the beginning of step~$t$
and let $m'_D$ be the sum of out-degrees of vertices in $\des'$.
Note that we have $m'_D \le m'_A + |\anc'|\log n$ since this step was
operated in mode~(i) or in mode~(iii).

The total work done from the side of $v$ is $m'_D + |\des'|\log n$.
Thus the total work $T(e_i)$ is 
$O(m_A + |\anc|\log n + m'_{v}  + |\des'|\log n)$.
Using the inequality $m'_D \le m'_A + |\anc'|\log n \le
m_A + |\anc|\log n$, we have $T(e_i)$
upper bounded by a constant times
\begin{equation}
\label{EQN1}
m_A + |\anc|\log n + |\des'|\log n.
\end{equation}

Let us concentrate on the last term
$|\des'|\log n$ in the above sum. Let $\des''$ be the
set of descendants of $v$ at the beginning of the last step when
we ran in mode~(iii). So $m''_D + |\des''|\log n \le m''_A$,
where $m''_D$ is the sum of the out-degrees of vertices in $\des''$,
$\anc''$ is the set of ancestors of $u$ at the beginning of this
step and $m''_A$ is the sum of the in-degrees of vertices in $\anc''$.
After this step, whenever we explored edges on the side of $v$, it was
in mode~(i), thus visiting a descendant of $v$ was always
accompanied by visiting an ancestor of $u$. 
So $|\des'| \le |\des''| + |\anc|$. Substituting this bound in (\ref{EQN1})
we get that
\[ T(e_i) \le c(m_A + |\anc|\log n + |\des''|\log n) \ \ \ \text{for some constant $c$}.\]
We have the following relations 
(see Claim~1 for the definitions of $\Phi(e_i)$ and $\Psi(e_i)$):
\begin{itemize}
\item $(m_A + |\anc|\log n)^2 \le (m_A + |\anc|\log n)m_D$\\ 
\hspace*{3cm} $\le \Phi(e_i) + \Psi(e_i)\log n$.
\item $(|\des''|\log n)^2 \le (|\des''|\log n)m''_A $ \\
\hspace*{2.35cm} $\le \nu(e_i)\log n$

where $\nu(e_i)$ is the number of pairs $(e,z) \in E \times V$ that get ordered with
respect to
each other for the first time now due to the insertion of $e_i$.
\end{itemize}

The proofs of the above relations are analogous to the proofs given in Claim~1 and we
refer the reader to the proof of Claim~1 (in Section~\ref{new-analysis}).

We are now ready to complete the proof of Lemma~\ref{lem9}.
$\sum T(e_i)$ where the sum is over all those $e_i$ whose last step was performed in
mode~(ii) is at most $\sum_i (p_i + q_i)$ where $p_i^2 \le \Phi(e_i) + \Psi(e_i)\log n$
and $q_i^2 \le \nu(e_i)\log n$. Note that $\sum_i \nu(e_i)$ is at most $mn$ since 
each pair $(e,z) \in E\times V$ can contribute at most 1 to $\sum_i \nu(e_i)$.
Using Cauchy's inequality, we have
\begin{eqnarray*}
\sum_i (p_i + q_i) & \le & (\sqrt{\sum p_i^2} + \sqrt{\sum q_i^2})\sqrt{m}\\
& \le & \left(\sqrt{\sum \Phi(e_i) + \sum \Psi(e_i)\log n} + \sqrt{\sum \nu(e_i)\log n}\right)\sqrt{m}\\
& \le & (\sqrt{m^2 + mn\log n} + \sqrt{mn\log n})\sqrt{m}\\
& \le & (m + 2\sqrt{mn\log n})\sqrt{m}
\end{eqnarray*}

Analogously, we can show that $\sum T(e_i)$ where the sum is over all those 
$e_i$ whose last step was performed in mode~(iii) is at most
$O((m + \sqrt{mn\log n})\sqrt{m})$. Thus $S_2$ is $O((m + \sqrt{mn\log n})\sqrt{m})$.
Since $\sqrt{mn\log n} \le (m + n\log n)/2$ (geometric mean is at most the arithmetic mean),
we have $S_2$ is $O((m + n\log n)\sqrt{m})$.
\end{proof}

\paragraph{Conclusions.}We considered the problem of 
maintaining the topological order of a directed acyclic graph on $n$
vertices under an online edge insertion sequence of $m$ edges. This problem has
been well-studied and the previous best upper
bound for this problem was 
$O(\min\{m^{3/2}\log n, \ m^{3/2}+n^2\log n, \ n^{2.75}\})$.
Here we showed an improved upper bound of 
$O(\min(n^{5/2}, (m+n\log n)\sqrt{m}))$ for this problem.

\medskip

\paragraph{\em Acknowledgments.} We are grateful to Deepak Ajwani and Tobias
Friedrich for their helpful feedback.


\bibliographystyle{plain}

\end{document}